\documentclass[10pt,twocolumn,superscriptaddress,aps,prx,balancelastpage]{revtex4-2}
\usepackage[utf8]{inputenc}
\usepackage{amsmath,amsfonts,amssymb,graphicx,physics,mathtools,braket}
\usepackage[dvipsnames]{color,xcolor,colortbl}
\usepackage{hyperref}
\hypersetup{
  breaklinks=true,
  colorlinks=true,
  linkcolor=NavyBlue,
  citecolor=NavyBlue,
  urlcolor=NavyBlue,
  plainpages=false
}
\usepackage{CJKutf8}

\definecolor{RED4}{rgb}{0.75,0,0}

\newtheorem{theorem}{Theorem}

\makeatletter
\let\latex@addcontentsline\addcontentsline
\renewcommand{\addcontentsline}[3]{}
\makeatother

\begin{document}

\title{Entanglement advantage in sensing power-law spatiotemporal noise correlations}

\author{\begin{CJK}{UTF8}{gbsn}Yu-Xin Wang (王语馨)\end{CJK}}
\email{yxwang.physics@outlook.com}
\affiliation{Joint Center for Quantum Information and Computer Science, NIST/University of Maryland, College Park, MD, 20742, USA}

\author{Anthony J. Brady}
\affiliation{Joint Center for Quantum Information and Computer Science, NIST/University of Maryland, College Park, MD, 20742, USA}
\affiliation{Joint Quantum Institute, NIST/University of Maryland, College Park, MD, 20742, USA}

\author{Federico Belliardo}
\affiliation{Pritzker School of Molecular Engineering, University of Chicago, Chicago, IL 60637}

\author{{Alexey~V.~Gorshkov}}
\affiliation{Joint Center for Quantum Information and Computer Science, NIST/University of Maryland, College Park, MD, 20742, USA}
\affiliation{Joint Quantum Institute, NIST/University of Maryland, College Park, MD, 20742, USA}

\begin{abstract} 
Noise sensing underlies many physical applications including tests of non-classicality, thermometry, verification of correlated phases of quantum matter, and characterization of criticality. While previous works have shown that quantum resources such as entanglement and squeezing can enhance the sensitivity in estimating deterministic signals, less is known about the entanglement advantage in sensing correlated stochastic signals (noise). In this work, we compute the fundamental sensitivity limits of quantum sensors in probing spatiotemporally correlated noise. We first prove the fundamental quantum limits in sensing spatially correlated Markovian noise using entangled and unentangled sensors, respectively. Focusing on power-law spatial noise correlations, which naturally arise in condensed matter systems with long-range interactions and/or near criticality, we further derive a scalable entanglement advantage when the power-law decays slowly. Then, considering a target signal with a $1/f^{p}$-type spectrum, we demonstrate that non-Markovianity may entirely modify the nature of entanglement advantage in estimating spatial noise correlations. Our protocols can be implemented using state-of-the-art quantum sensing platforms including solid-state defects, superconducting circuits, and neutral atoms.
\end{abstract}
\maketitle

Detecting spatial or temporal noise correlations provides a primary method for uncovering novel physical phenomena and probing fundamental laws of physics. For example, charge~\cite{Mailly2000} and current~\cite{Johnson1928,Nyquist1928} noise measurements can reveal emergent excitations~\cite{Levitov1996,Halperin2021} and enable a direct test of the fluctuation-dissipation theorem~\cite{Callen1951}; non-Gaussian fluctuations in cosmic microwave background may encode information about the primordial universe~\cite{Komatsu2002}; low-frequency $1/f$ noise~\cite{Clarke1987,Clarke1996,Altshuler2014,Tarucha2017} and two-level fluctuators~\cite{Schriefl2006} also directly impact dynamics of solid-state quantum systems. More recently, quantum sensors of noise have attracted immense interest due to its relevance in benchmarking quantum devices~\cite{Oliver2011,Oliver2012,Yacoby2013,Jelezko2015,Oliver2020,Nichol2022,Oliver2023} and sensing applications~\cite{Rovny2022:CovMagnetometry,Yao2023,Pasupathy2024,Ji2024:StructAnalysis, Cheng2025:MultiplxNV, Cambria2025:ScalableNV, Huxter2025:DualSpinSensor,Chong2025,Rovny2025:NVBellSensing, Zhou2025:NVBellSensing,Le2025:WidebandNVMag, Rezai2025:ProbingDynamics, Biswas2025:2DspinEnsemble}. In this context, it is natural to ask: What is the fundamental quantum limit in sensing spatiotemporally correlated noise? Moreover, are quantum resources such as entanglement necessary for achieving such optimal limits? The answer to these questions can enable the efficient detection of correlated phases in both quantum materials~\cite{Casola2018:nvCM, Rovny2024:NVforManyBody} and simulated quantum matter. Furthermore, accurate characterization of noise susceptibilities also has implications for building better many-body quantum sensors~\cite{Zhou2020:IntSpins, Bollinger2021, Colombo2022:QSNEcho, Gao2025:NVEcho, Wu2025:NVSpinSqz} and quantum information processors~\cite{Frank2005,Blatt2011,Painter2019}.

In this work, we address these questions for a paradigmatic class of fluctuations: noise with power-law spatial and/or temporal correlations. Such correlations can naturally emerge in correlated quantum many-body systems such as superfluids~\cite{Dalibard2006,Yamamoto2012}, superconductors~\cite{Clarke1987,Clarke1996}, spin glasses~\cite{Young2009,Yllanes2025}, and engineered systems such as exciton-polariton condensates~\cite{Yamamoto2012} and photonic lattices~\cite{Zubin2025}.  To solve this problem, we compute the fundamental sensitivity limits for quantum sensors of spatially or temporally correlated noise given an arbitrary initial sensor state. Our approach also incorporates arbitrary quantum control and entanglement with ancillae during a sensing protocol. Applying our result to the case of estimating spatially correlated Markovian power-law noise correlations, we further prove a scalable entanglement advantage in estimating such correlations, as long as the decay exponent is smaller than a critical value. We then extend our results to estimating non-Markovian spatial noise correlation with a power-law spectral density and find that temporal correlation in the target can drastically modify the nature of entanglement enhancement in detecting spatially correlation noise.

\begin{figure}[t]
    \centering
    \includegraphics[width=0.35\textwidth]{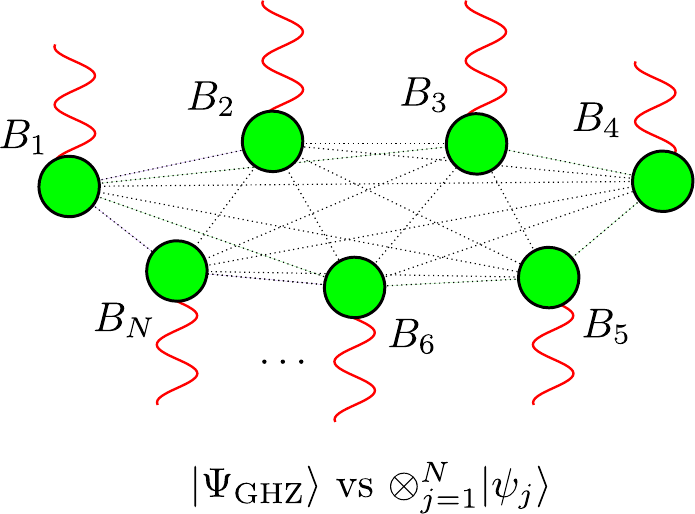}
    \caption{Schematic illustrating $N$ quantum sensors of correlated noise, equally distributed in a $1$-dimensional lattice. We are interested in estimating the overall strength of power-law noise fluctuations in space and/or time domain. To understand the existence and nature of entanglement advantage in this task, we compare the performance of entangled sensors [initialized in e.g., a GHZ-type state $\ket{\Psi_{\text{GHZ}}}$ in Eq.~\eqref{eq:ghz.def}] versus that of unentangled sensors initialized in $\otimes _{j=1}^{N} \ket{\psi _{j}}$.}
    \label{fig:schematic}
\end{figure}

{\it Problem setup.---}We consider $N$ quantum sensors of spatiotemporally correlated noise fluctuations. While our results are applicable to both classical noise as well as fluctuations generated by a quantum many-body target (see Sec.~\ref{subsec:noise.models} in the Supplemental Material for details), for concreteness, we focus on classical stationary, Gaussian noise fields as the sensing target. The target signal can be generally described by a classical stationary stochastic process $B (x,t)$ dependent on  position $x$ (which at this point can be in any number of spatial dimensions)  and time $t$. To characterize $B (x,t)$, we define its mean $\overline{B (x,t)}$ and two-point correlation function
\begin{align}
\label{eq:def.autocorr}
C (x,x'; t-t' ) \equiv \overline{\delta B (x,t) \delta B (x',t') }
,
\end{align}
where $\delta B (x,t) \equiv B (x,t) - \overline{B (x,t)} $. We use a collection of $N$ spatially distributed quantum sensors to extract properties of the noise correlation function in Eq.~\eqref{eq:def.autocorr}. We assume that $B (x,t)$ induces dephasing noise on the sensors, which is a primary mode of operation for quantum noise sensors~\cite{Rovny2024:NVforManyBody}, so that the sensor-signal interaction commutes with the intrinsic sensor Hamiltonian. As such, in the interaction picture with respect to the intrinsic sensor Hamiltonian, the sensor-signal interaction Hamiltonian reads $\hat H _{\text{int}} (t)= \sum _{j=1} ^{N} B (x _{j},t) \hat h _{j} $, where $x _{j}$ denote the sensors' spatial coordinates, and $\hat h _{j}$ are Hermitian operators describing the sensor-signal coupling. 
In a typical sensing task, $C (x,x';t-t' )$ depends on one or more parameters $\{ \xi _{\ell}\} $ through a known functional form; our goal is to estimate those parameters by measuring a collection of quantum sensors coupled to $B(x,t)$. For example, $\xi _{\ell}$ may correspond to a particular correlation matrix element or the overall strength of the noise correlation. In what follows, we will mainly focus on the case of single parameter estimation; we will briefly comment on generalization to the multiparameter case and effects of nuisance parameters but leave a systematic discussion to future work.

While each spatial eigenmode of the target noise field, as specified by the eigenvectors of $C (x,x'; t-t' ) $ in Eq.~\eqref{eq:def.autocorr}, may host distinct temporal correlation structures, for simplicity, in this work we assume the spatial and temporal components of the noise correlation factorize. 
We are interested in noise processes with long-range spatial correlations that decay as a power law: 
\begin{align}
\label{eq:Cmat.x.alpha}
C (x,x';t-t' )  \propto  |x - x' | ^{-\alpha} c_{0} (|x-x'|)
,
\end{align}
where $\alpha>0$ is a positive constant, and $c _{0} (\cdot)$ is a regularization function satisfying $\lim _{x\to 0 } x^{-\alpha} c _{0} (x)= \text{const.}$ and $c _{0} (x) =1$ for $x \ge x_{0}$. Such power-law noise correlations naturally arise in systems near criticality~\cite{Huse1988,Huse1988noneq,Young2009} and engineered optical~\cite{Zubin2025} or exciton-polariton~\cite{Yamamoto2012} system that enable long-range connectivity.
We also assume that the sensors are arranged in a one-dimensional lattice geometry with $x_j = jx_{0} $. We consider both Markovian (i.e., white) noise with no temporal correlation [$C (x,x';t-t' )  \propto \delta (t-t')$], as well as non-Markovian noise that hosts both spatial and temporal power-law correlations.

To understand the nature of entanglement advantage in this noise sensing task, we need to derive the fundamental sensitivity limits of both entangled and unentangled (i.e., separable) sensors, respectively. For this, we may consider a generic sensing protocol, where we initialize the sensors in a certain state $\hat \rho_{i}$ and let it evolve under the target noise field and quantum control, which produces a multi-sensor state
\begin{align}
\label{eq:rho.xi.def}
\hat \rho _{\xi} = \overline{\hat U (t) \hat \rho_{i} \hat U ^{\dag} (t)  }
, \quad 
\hat U (t) =\mathcal{T} e ^{-i \int ^{t}_{0} \hat H _{\text{I}} (t')  dt'},
\end{align}
where $\hat H _{\text{I}} (t) = \hat H _{\text{int}} (t)+ \hat H _{\text{c}} (t) $, and the control Hamiltonian $\hat H _{\text{c}} (t)$ does not depend on $\xi$. Note that Eq.~\eqref{eq:rho.xi.def} also applies to the case where the sensors are initially entangled with one or more ancillae and we allow general sensor-ancilla control, as long as we define $\hat \rho_{i}$ and $\hat \rho _{\xi} $ as density matrices acting on the entire sensor-ancilla system. We then perform a positive operator-valued measure (POVM)~\cite{MikeIke2011} and use the measurement statistics to build an unbiased estimator, $\check \xi$. Given a specific $\hat \rho _{\xi} $ and $M$ runs of the protocol, the optimal sensitivity, as quantified by the mean-squared error (MSE) $\Delta \check \xi$, achievable over all measurement strategies can be computed with the help of the quantum Fisher information (QFI) $\mathcal{F} _{Q} (\hat \rho _{\xi} ) $:
\begin{align}
\label{eq:qCRB}
\Delta \check \xi \ge \Delta \check \xi _{\text{opt}} = [M \mathcal{F} _{Q} (\hat \rho _{\xi} )  ] ^{-1}
,
\end{align}
where the QFI is defined as~\cite{Caves1994} 
\begin{align}
    \mathcal{F} _{Q} (\hat \rho _{\xi} ) &\equiv 8 \lim \limits_{d \xi \to 0} \frac{ 1- \sqrt{ F (\hat \rho _{\xi} , \hat \rho _{\xi + d \xi}  ) } }{d \xi ^{2}} \nonumber\\
    &= 8 \lim \limits_{d \xi \to 0} \frac{ 1- \Tr(\sqrt{\sqrt{\hat \rho _{\xi} } \hat \rho _{\xi+ d \xi}  \sqrt{\hat \rho _{\xi} }}) }{d \xi ^{2}}
\end{align}
and $F (\cdot , \cdot ) $ denotes the fidelity function. 
To quantify performance of a particular sensing strategy, we further divide $\mathcal{F} _{Q} (\hat \rho _{\xi} )$ by the signal accumulation time per shot $t_{\text{s}}$ to account for resources spent during the protocol. Supposing that the total evolution time is $T _{\mathrm{tot}}$ and the measurements and resets are sufficiently fast~\footnote{When the resets are slow, we need to optimize the QFI per shot $\mathcal{F} _{Q} (\hat \rho _{\xi} (t_{\text{s}}) )$ rather than the time-averaged QFI. In this shot-limited case, we generally expect an entanglement advantage that grows exponentially in $N$ whenever the correlation in Eq.~\eqref{eq:Cmat.x.alpha} supports a $\Theta(1)$ many-body Lindbladian gap~\cite{Wang2024:ExpAdvantage}.}, the minimal MSE can be computed via Eq.~\eqref{eq:qCRB}, as~\cite{Wang2024:ExpAdvantage}
\begin{align}
\label{eq:mse.opt.tavg.qfi}
\text{MSE} (\check \xi ) \ge \left  [T _{\mathrm{tot}} \max _{t_{\text{s}}} \frac{\mathcal{F} _{Q} (\hat \rho _{\xi} (t_{\text{s}}) )}{t_{\text{s}}} \right ] ^{-1}.
\end{align}
Note that the minimal MSE always scales as $T _{\mathrm{tot}} ^{-1} $ in estimating Markovian noise; as we will show, this scaling holds generically for estimating non-Markovian noise with power-law noise spectrum as well.

In principle, we can derive the fundamental sensitivity limits of entangled (separable) noise sensors by optimizing over all sensing strategies for sensors that involve (do not involve) entanglement.  However, such a brute-force approach would require complex optimization, which can be computationally costly. 
In what follows, we circumvent this difficulty by proving a general optimality result in sensing Markovian noise with arbitrary spatial correlations. We then discuss generalizations to non-Markovian noise.

{\it Optimality of fast-reset in detecting Markovian noise.---}We first consider the case of probing Markovian noise. Previous works have extensively studied the fundamental quantum limits of frequency estimation \textit{in the presence of} Markovian noise~\cite{Guta_2012,Kolodynski2013,Maccone2014,Kraus2014,Lukin2014QEC,Smirne2016,Maccone2016,Duer2017,Huelga2018,Lasenby2022} as well as estimation of spatially uncorrelated Markovian noise~\cite{Imai2008,Sekatski2017,Sekatski2022,DDobrzanski2023,Gefen2025}, even when arbitrary quantum control and entanglement with ancillae are allowed. In the context of noise estimation, those results make use of a key insight first pointed out in~\cite{Imai2008}, which proved that the optimal strategy (over all possible choices of the sensor's initial state, the control applied during signal accumulation, and the readout) to estimate any Pauli channel involves an \textit{independent and identically distributed (iid)}-type protocol, where one applies the channel once, immediately measures, and repeats $M$ times. 
It was later shown that this intuition applies to thermometry~\cite{Correa2015:OptThermometry,Sekatski2022},  estimating certain purely dissipative infinitesimal channels~\cite{ DDobrzanski2023, Gefen2025}, and dephasing Lindbladians with all-to-all correlations~\cite{Wang2024:ExpAdvantage}.

Here, we prove that a fast-reset strategy is optimal for estimating generic pure-dephasing dynamics. Denote by $\boldsymbol{A }(\xi) $ via $C (x_{j},x_{j'};t-t' ) = \gamma \delta (t-t') A _{jj'}(\xi) $ the sensor dephasing coefficient matrix, where $C (x_{j},x_{j'};t-t' )$ is the noise correlation in Eq.~\eqref{eq:def.autocorr}. By definition, $\boldsymbol{A }(\xi) $ is a real $\xi$-dependent positive semidefinite matrix. The sensor dynamics can be compactly described by a dephasing Lindbladian (i.e., generator of dissipative dynamics): $\mathcal{L} _{\xi} \hat \rho = \frac{\gamma}{2} \sum _{j,\ell =1} ^{n} A _{j\ell} (\xi) ( \hat h_\ell \hat\rho \hat h_j -\frac{1}{2} \{ \hat h_j  \hat h _\ell , \hat \rho\}) $.
We prove the following lower bound on the MSE of sensing $\xi$ for a general $\boldsymbol{A} (\xi)$ (c.f., Sec.~\ref{suppsec:opt.fast.reset}): 
\begin{theorem}
\label{thm:opt.ent}
(Optimality of fast-reset for entangled sensors) Given access to a dephasing Lindbladian $\mathcal{L} _{\xi}$ acting on $N$ quantum sensors for total time $T _{\mathrm{tot}}$ and arbitrary sensor initial state, control, entanglement with ancillae, and readout, the minimal MSE of an unbiased estimator $\check \xi$ reads $\mathrm{MSE} (\check \xi) _{\mathrm{opt}} = [T _{\mathrm{tot}} \max _{\hat \rho _{i}} f_Q (\hat \rho _{i})] ^{-1}  $, where $f_Q (\hat \rho _{i})= \lim _{dt \to 0} [\mathcal{F} _{Q} (e ^{\mathcal{L} _{\xi} dt} \hat \rho _{i} )/dt]$ denotes the time-averaged QFI in the asymptotic short-time limit for a sensor-ancilla system initialized in $\hat \rho _{i}$. The optimal MSE can be achieved by evolving an initial sensor-ancilla state $\hat \rho _{i}$ for an infinitesimal amount of time $dt \to 0 $ under $\mathcal{L} _{\xi}$, measure, reset, and repeat $T _{\mathrm{tot}}/dt$ times.
\end{theorem}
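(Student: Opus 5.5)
The argument has two halves: an upper bound on the total QFI of any protocol, and a protocol that attains it.

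\textbf{Upper bound.} Discretize the total time $T_{\mathrm{tot}}$ into $K=T_{\mathrm{tot}}/dt$ steps. Any adaptive protocol with controls and ancillae is then a sequence of interleaved channels $\mathcal{E}_{\xi}^{dt}=e^{\mathcal{L}_{\xi}dt}\otimes\mathcal{I}_{\mathrm{anc}}$ and $\xi$-independent unitaries or channels $\mathcal{V}_k$. Pure-dephasing dynamics commute with the intrinsic Hamiltonian, so a finite control Hamiltonian applied during $dt$ only adds $O(dt^2)$ corrections to the Trotterized form. The tool I would use is the channel-extension (purification) bound on the QFI of sequential strategies, as in Demkowicz-Dobrza\'nski--Maccone and related work. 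For a single step I aim for
\begin{align}
\mathcal{F}_Q\bigl(\mathcal{E}^{dt}_{\xi}(\hat\rho)\bigr)\le \mathcal{F}_Q(\hat\rho)+dt\, f_Q^{\max}+O(dt^{3/2}),
\end{align}
where $f_Q^{\max}=\max_{\hat\rho_i}f_Q(\hat\rho_i)$ is taken over all sensor-ancilla states. Iterating this over $K$ steps gives $\mathcal{F}_Q\le T_{\mathrm{tot}}f_Q^{\max}$. Because the bound must be additive in $f_Q$, I would not use the crude Kraus bound $4\|\sum\dot K^\dagger\dot K\|$. Instead I would exploit the minimization over Kraus representations.

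\textbf{Kraus construction.} Write the dephasing Lindbladian in diagonal form. Diagonalize $\boldsymbol{A}(\xi)=\sum_m\lambda_m(\xi)\mathbf v_m(\xi)\mathbf v_m(\xi)^{T}$ with jump operators $\hat L_m=\sum_j v_{m,j}\hat h_j$. The channel then has Kraus operators $K_0=\mathbb{1}-\frac{\gamma dt}{4}\sum_m\lambda_m\hat L_m^2$ and $K_m=\sqrt{\gamma\lambda_m dt/2}\,\hat L_m$. Since $\partial_\xi K_m=O(\sqrt{dt})$, the ``$\alpha$''-term $\sum\dot K^\dagger\dot K$ is $O(dt)$. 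The ``$\beta$''-term $i\sum\dot K^\dagger K$ can be made $O(dt^{3/2})$ by a $\xi$-dependent unitary reshuffling of the Kraus operators. For a purely dissipative generator there is no Hamiltonian part in $\xi$, so the linear-in-$\sqrt{dt}$ derivative has no Hermitian-generator component that survives. This is exactly the condition under which the known sequential bound yields linear, not quadratic, growth in $T_{\mathrm{tot}}$. To obtain the sharper single-step bound with the state-dependent quantity $f_Q$ rather than an operator norm, I would apply the purification characterization of QFI directly to the joint state after each step. The extra QFI gained in step $k$ is then bounded by $dt$ times the infinitesimal QFI rate of the current state, plus cross terms. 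Those cross terms vanish at leading order because the $\beta$-type term has been gauged away.

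\textbf{Achievability.} Prepare the maximizing $\hat\rho_i$, evolve for $dt$, measure optimally, and reset. Each shot contributes $f_Q(\hat\rho_i)\,dt+o(dt)$, and the $K=T_{\mathrm{tot}}/dt$ shots are independent, so by QFI additivity and asymptotic attainability of the quantum Cram\'er--Rao bound \eqref{eq:qCRB} the total Fisher information is $T_{\mathrm{tot}}f_Q$.

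\textbf{Main obstacle.} The hardest step is the single-step inequality. I need to show that the gain per step is controlled by the supremum of $f_Q$ over states, even though the current state already carries $\xi$-dependence and quantum correlations with the ancilla built up in earlier steps. I would attempt this by expanding the fidelity $F(\hat\rho_\xi,\hat\rho_{\xi+d\xi})$ of the purified joint state. I would use the dephasing structure (all $\hat h_j$ commuting, jump operators Hermitian) to show that the cross term between the accumulated $\xi$-derivative and the new step's derivative is of order $dt^{3/2}$. This mirrors the infinitesimal-channel arguments of the prior works cited before the theorem.
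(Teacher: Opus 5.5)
Your proposal is correct in outline, but it takes the channel-estimation route, which the paper mentions only in a footnote; the paper's own proof is different.

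\textbf{The paper's route.} The paper never discretizes time. It differentiates $\mathcal{F}_Q=\Tr[\hat S^2\hat\rho]$ using the SLD equation, and every term containing $\hat H_{\mathrm c}(t)$ cancels identically. This leaves $\partial_t\mathcal{F}_Q=2\Tr[\hat S(\partial_\xi\mathcal{L}_\xi)\hat\rho]$ minus a nonnegative commutator term. Cauchy--Schwarz ties the first term to the second, and completing the square gives $\partial_t\mathcal{F}_Q(t)\le 2\gamma\sum_{ab}(\cdots)_{ab}\Tr(\Delta\hat L_a\Delta\hat L_b\,\hat\rho(t))$. The right-hand side is the fast-reset rate functional evaluated on the current state; the paper integrates it in time and then maximizes over states.

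\textbf{Your main obstacle is easier than you expect.} Within your route, no state-dependent refinement or fidelity expansion is needed. Your own Kraus operators are Hermitian and mutually commuting, since they are functions of the commuting $\hat h_j$. Exactly, one can take them real and diagonal in the joint eigenbasis, where $e^{\mathcal{L}_\xi dt}$ acts as a real, unit-diagonal Schur multiplier. Then $\beta=i\sum_i\dot K_iK_i=\tfrac{i}{2}\partial_\xi\sum_iK_i^2=0$ identically.

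Write a gauge as $h=R+iS$, with $R$ real symmetric and $S$ real antisymmetric, and let $v$ collect the diagonal entries of the $K_i$. Then $\alpha$ is diagonal with entries $\|\dot v+Sv\|^2+\|Rv\|^2$. So the $\alpha$-minimizing gauge has $R=0$, and it still has $\beta=0$; the real centering shifts are of exactly this type. The sequential recursion therefore has no cross term, and it gives $\mathcal{F}_Q\le K\,\mathcal{F}_{\mathrm{ch}}(e^{\mathcal{L}_\xi dt})$. Here $\mathcal{F}_{\mathrm{ch}}=4\min\|\alpha\|=dt\max_{\hat\rho_i}f_Q(\hat\rho_i)+o(dt)$, by Fujiwara--Imai plus a minimax over the centering constants. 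The operator-norm bound is thus already sharp, because Theorem~\ref{thm:opt.ent} maximizes over all sensor--ancilla states. Note that with only $\|\beta\|=O(dt^{3/2})$, the closed-form sequential bound would leave an $O(T_{\mathrm{tot}}^2)$ term.

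\textbf{What each route buys.} Yours rests on standard machinery and handles feed-forward between runs automatically. For continuous $\hat H_{\mathrm c}(t)$, however, it needs a Trotter/continuity step. That step is justified because the control is $\xi$-independent, not because of any commutation; $\pi$ pulses do not commute with the $\hat h_j$. Also, since your route bounds by a maximum over all inputs, it does not straightforwardly restrict to unentangled strategies. The paper's identity avoids discretization and bounds the rate by a functional of $\hat\rho(t)$ itself, which it then reuses for Theorem~\ref{thm:opt.sep}.
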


In order to prove the optimality of a fast-reset protocol in certain problems, previous works utilize arguments based on quantum channel estimation to derive lower bounds on the optimal MSE~\footnote{While we proved Thm.~\ref{thm:opt.ent} using a different approach, arguments based on quantum channel estimation could have been used to prove Thm.~\ref{thm:opt.ent}.}, and then explicitly show that such bounds can be achieved by a fast-reset protocol for estimating specific dissipative dynamics~\cite{Imai2008,Sekatski2017,Sekatski2022,DDobrzanski2023,Gefen2025}. While such lower bounds are very powerful as they encompass  all quantum sensing protocols with arbitrary initial states, quantum control (including entanglement with ancillae), and measurement strategies, they do not straightforwardly apply to the problem of unentangled sensors, which only involves a specific subset of initial sensor states and control. In other words, the optimality result in Theorem~\ref{thm:opt.ent} relies on the capability of choosing any sensor initial states, and does not apply to determining the optimal performance of unentangled sensors. As such, novel analytical techniques are required to address the question of entanglement advantage.

In Sec.~\ref{suppsec:opt.fast.reset} of the Supplemental Material, we prove that a fast-reset protocol also achieves the optimal sensitivity over all unentangled sensors:
\begin{theorem}
\label{thm:opt.sep}
(Optimality of fast-reset protocols for unentangled sensors) Given access to a dephasing Lindbladian $\mathcal{L} _{\xi}$ for time $T _{\mathrm{tot}}$, arbitrary separable probe state, entanglement between individual sensors and their corresponding (local) ancillae, local control, and arbitrary readout, the minimal MSE of an unbiased estimator is $\mathrm{MSE} (\check \xi) _{\mathrm{opt}} = [T _{\mathrm{tot}} \max _{\hat \rho _{i} \in \{ \hat \rho _{\mathrm{sep}} \} } f_Q (\hat \rho _{i}) ] ^{-1}  $, where $\{ \hat \rho _{\mathrm{sep}} \}$ denotes the set of all sensor states without cross-sensor entanglement. Note that the use of local ancillae and control during signal accumulation do not help reduce the MSE in this task, and $\mathrm{MSE} (\check \xi) _{\mathrm{opt}}$ only involves an optimization over all separable initial states of the sensors. Further, the optimal sensitivity can be achieved through a fast-reset protocol.
\end{theorem}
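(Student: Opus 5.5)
Two directions have to be shown. \emph{Achievability}: a fast-reset protocol started from a separable state $\hat\rho_{\mathrm{sep}}$ reaches MSE $[T_{\mathrm{tot}} f_Q(\hat\rho_{\mathrm{sep}})]^{-1}$. This is the same iid argument as in Theorem~\ref{thm:opt.ent}, combined with Eq.~\eqref{eq:mse.opt.tavg.qfi} in the limit $t_{\text{s}}\to 0$. \emph{Converse}: no protocol that uses only local ancillae and local control does better. The whole argument rests on the pure-dephasing structure of $\mathcal{L}_\xi$. Every $\hat h_j$ commutes with the control-free evolution, so $\mathcal{L}_\xi$ is diagonal in the product eigenbasis of the $\hat h_j$.

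\textbf{Step 1: reduce $f_Q$ to a quadratic form.} For $dt\to0$, $e^{\mathcal{L}_\xi dt}\hat\rho_i=\hat\rho_i+dt\,\mathcal{L}_\xi\hat\rho_i+\dots$. Only the $\partial_\xi \boldsymbol{A}$ part carries $\xi$-dependence, so
\begin{align}
f_Q(\hat\rho_i)=\lim_{dt\to0}\frac{\mathcal{F}_Q(e^{\mathcal{L}_\xi dt}\hat\rho_i)}{dt}.
\end{align}
For pure states I expect this to take the form
\begin{align}
f_Q=\frac{\gamma}{2}\,\frac{\big(\mathrm{tr}[\partial_\xi\boldsymbol{A}\,\boldsymbol{\Sigma}]\big)^2}{\mathrm{tr}[\boldsymbol{A}\,\boldsymbol{\Sigma}]},
\end{align}
up to a constant and with pseudo-inverse conventions, where $\Sigma_{j\ell}=\langle\hat h_j\hat h_\ell\rangle-\langle\hat h_j\rangle\langle\hat h_\ell\rangle$ is the covariance matrix of the generators. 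The reason is that the leading-order QFI of a small mixing of a pure state into an orthogonal direction is $(\partial_\xi p)^2/p$, with $p=dt\,\gamma\,\mathrm{tr}[\boldsymbol{A}\boldsymbol{\Sigma}]/2$. For mixed states, convexity of $f_Q$ bounds them by their pure components. I will take this formula, or the general statement $f_Q=\lim \mathcal{F}_Q/dt$, as established in the proof of Theorem~\ref{thm:opt.ent}.

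\textbf{Step 2: a time-continuous converse.} Consider a general unentangled protocol: a product initial state on each sensor with its local ancilla, local unitaries $\hat U_j(t)$, and evolution under $\mathcal{L}_\xi$. Use Trotterization. For any step size $dt$, the protocol is a sequence of infinitesimal channels $e^{\mathcal{L}_\xi dt}$ interleaved with local controls. Each local unitary maps $\hat h_j\mapsto \hat U_j^\dagger\hat h_j\hat U_j$, a different local operator. The key claim is that the QFI grows subadditively over the steps: $d\mathcal{F}_Q/dt\le f_Q^{\mathrm{eff}}(\hat\rho(t))$. I would prove it with the channel-extension bound $\mathcal{F}_Q(\mathcal{E}_\xi\otimes\mathrm{id}(\hat\rho))\le \mathcal{F}_Q(\hat\rho)+\text{(local increment)}$, applied through the purification/Kraus representation of the step channel. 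The increment is $f_Q$ evaluated on the instantaneous state with the rotated generators.

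\textbf{Step 3: separability is preserved, which is the main obstacle.} The instantaneous state need not stay separable, because correlated dephasing can create classical correlations. The bound needs $\max f_Q$ over the reachable states to be at most $\max f_Q$ over separable states. To get this, I would note that $f_Q$ depends on the state only through the covariance $\boldsymbol{\Sigma}$ of the (rotated) local generators, and that dephasing only shrinks off-diagonal coherences. So the reachable $\boldsymbol{\Sigma}$ are covariance matrices of states that are mixtures of products; in particular the diagonal blocks are those of local states. The Step 1 expression, or its mixed-state upper bound, is convex in the state. A mixture of product states therefore gives $f_Q$ no larger than its best product component, which is separable. Local rotations and local ancillae only change which local operator plays the role of $\hat h_j$, and optimizing over local states already covers this, so local ancillae give no advantage. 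Integrating $d\mathcal{F}_Q/dt\le\max_{\mathrm{sep}}f_Q$ over $T_{\mathrm{tot}}$, accounting for resets, and applying the quantum Cramér–Rao bound Eq.~\eqref{eq:qCRB} gives the claimed MSE. The hardest point is making the convexity-over-mixtures argument rigorous for mixed, classically correlated reachable states. If that fails, I would instead bound the increment directly by $\sum$ over product-state decompositions using the joint convexity of the QFI increment.
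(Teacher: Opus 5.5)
Your overall architecture matches the paper's: a state-dependent bound on $\partial_t\mathcal{F}_Q$, integrated in time, optimized over reachable states, and saturated by fast reset. Your achievability half is fine. The converse does not close, though, because Steps~2 and~3 need two incompatible properties of the same object.

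\textbf{Step~2.} If your increment is the genuine fresh-start rate $f_Q(\hat\rho(t))$, it is convex (because the QFI is), but it does not bound $\partial_t\mathcal{F}_Q$. Take one qubit prepared in $\ket{+}$ and dephased. For $t>0$ the state $\hat\rho(t)$ has full rank, so restarting from it gives QFI $O(dt^2)$ and $f_Q(\hat\rho(t))=0$. Yet $\mathcal{F}_Q(t)=(\partial_\xi r_t)^2/(1-r_t^2)$, with $r_t=e^{-c\xi t}$, is still increasing at small $t$. What does bound the rate is the covariance functional $V(\hat\rho)\equiv 2\gamma\sum_{ab}[(\partial_\xi\boldsymbol{g})(\partial_\xi\boldsymbol{g}^\dagger)]_{ab}\Tr(\Delta\hat L_a\Delta\hat L_b\,\hat\rho)$, evaluated on the current, generally mixed state; on pure states it is the paper's expression for $f_Q$.

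The paper does not get $\partial_t\mathcal{F}_Q\le V(\hat\rho(t))$ from a channel-extension chain rule. The standard forms of that rule either leave an $O(\sqrt{dt})$ cross term per step ($\sqrt{\mathcal{F}}$-subadditivity) or bound the increment by a norm, i.e.\ by the optimum over all inputs, which is the entangled bound. Instead the paper differentiates $\mathcal{F}_Q=\Tr(\hat S^2\hat\rho)$ along the controlled evolution. The control terms cancel exactly, and $\partial_t\mathcal{F}_Q$ equals $2\Tr[\hat S(\partial_\xi\mathcal{L}_\xi)\hat\rho]$ minus a nonnegative quadratic form in the commutators $[\hat L_j,\hat S]$. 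Cauchy--Schwarz followed by completing the square then eliminates $\hat S$. That identity is what your Step~2 is missing.

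Separately, your Step~1 formula is the classical Fisher information of the jump/no-jump measurement, so it only lower-bounds $f_Q$. It agrees with the paper's pure-state expression when $\boldsymbol{A}=\xi\boldsymbol{A}_0$. It vanishes, however, for $\boldsymbol{A}=\mathrm{diag}(1+\xi,1-\xi)$ at $\xi=0$ with equal variances, where $f_Q>0$.

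\textbf{Step~3.} With the correct increment $V$, Step~3 fails, because $V$ is \emph{concave} in $\hat\rho$: a linear term minus a positive-semidefinite quadratic form in the means $\Tr(\hat L_a\hat\rho)$. Let $\ket{\Phi_{\max/\min}}=\otimes_a\ket{h_{a,\max/\min}}$. The separable state $\tfrac12(\ket{\Phi_{\max}}\bra{\Phi_{\max}}+\ket{\Phi_{\min}}\bra{\Phi_{\min}})$ has exactly the GHZ value of $V$, while each of its product components has $V=0$.

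Such correlations are reachable with local control. The state does stay separable, since for each noise realization the evolution is a product of local unitaries, but it stops being a product state. For qubits with $\hat h_a\propto\hat Z_a$, correlated dephasing of $\ket{+}^{\otimes N}$ produces $\mathrm{Cov}(\hat X_a,\hat X_b)>0$, and local Hadamards turn these into covariances of the generators. So ``a mixture of products is no better than its best component'' holds for $f_Q$, which is not a rate bound, and fails for $V$, which is.

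Your fallback does not rescue this. Convexity of the QFI over a product decomposition of $\hat\rho_\xi(t)$ needs $\xi$-independent weights, but here the weights (the noise-realization distribution) carry the $\xi$-dependence.

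Be aware that the paper itself passes from $\max_\tau V(\hat\rho(\tau))$ to a maximum over separable and then product states by asserting that this functional is convex (see its power-law and multiparameter sections). So you cannot simply import this step. It is rigorous when the control commutes with the $\hat h_a$, in particular with no control. Then $V(\hat\rho(t))=V(\hat\rho(0))$, because the dephasing conserves the mutually commuting $\hat h_a$, and convexity of the QFI over $\xi$-independent mixtures of product initial states finishes the argument. With general local control, an additional idea is required.
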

We prove this result by deriving a state-dependent bound on the rate of change in QFI that accommodates local ancillae and arbitrary local control; see Sec.~\ref{suppsec:opt.fast.reset} for details. Intuitively, a fast-reset strategy is optimal because the target signal involves Markovian noise, i.e.~has no memory: in this case, evolving the sensor under the target noise field for longer times in a single run would not enable a more efficient sensitivity than the standard quantum limit in $T _{\mathrm{tot}} $, in contrast to Hamiltonian estimation~\cite{Maccone2006,Pryde2007,Berry2009}. Note that ancillae may improve the sensor's performance in the presence of other unknown (nuisance) noise parameters~\cite{Gefen2025} or in genuinely multiparameter estimation problems.

As we show in the following section, this result allows us to rigorously compute the entanglement advantage in detecting spatially correlated Markovian noise among all possible quantum sensing protocols.

{\it Entanglement advantage in estimating spatially correlated Markovian noise.---}Entanglement advantage can be defined as the separation between the optimal performance of unentangled sensors and that of entangled ones. Using Eq.~\eqref{eq:mse.opt.tavg.qfi}, we can introduce a quantitative metric for entanglement advantage, $\mathcal{R}$, as 
\begin{align}
\mathcal{R} \equiv \frac{\max \limits_{\hat \rho _{\xi} (0)   } \max \limits_{t_{\text{s}}} \frac{\mathcal{F} _{Q} (\hat \rho _{\xi} (t_{\text{s}}) )}{t_{\text{s}}}  }{ \max \limits_{\hat \rho _{\xi} (0)  \in \{ \hat \rho _{\mathrm{sep}} \} } \max \limits_{t_{\text{s}}} \frac{\mathcal{F} _{Q} (\hat \rho _{\xi} (t_{\text{s}}) )}{t_{\text{s}}}  }
.
\end{align}
From Theorems~\ref{thm:opt.ent} and~\ref{thm:opt.sep} in the above section, we can rewrite the entanglement advantage for estimating Markovian noise in terms of the time-averaged QFI function $f_Q ( \cdot )$ in the short-time limit as $\mathcal{R} =  \max _{\hat \rho _{i}  } f_Q (\hat \rho _{i}) / \max _{\hat \rho _{i} \in \{ \hat \rho _{\mathrm{sep}} \} } f_Q (\hat \rho _{i}) $. The time-averaged QFI $f_Q (\hat \rho _{i}) $ in the short-time limit has been computed explicitly in Ref.~\cite{Brady2026:noiseQSN} for any pure initial state $\hat \rho _{i} = \ketbra{\psi _{i}} $ and can be expressed compactly as (see Sec.~\ref{suppsec:opt.fast.reset} for a detailed derivation)
\begin{align}
\label{eq:qfi.avg.max}
& f_Q (\ketbra{\psi _{i}} ) =
\nonumber \\
& 2 \gamma \sum _{ab} \left ( [\partial _{\xi}  \boldsymbol{g}  (\xi)]  [\partial _{\xi}  \boldsymbol{g} ^{\dag} (\xi)] \right ) _{ab} \braket{\psi _{i} | \Delta \hat h_a \Delta \hat h_b  |\psi _{i} } 
.
\end{align}
Here, the factor matrix $\boldsymbol{g}  (\xi)$ is defined via $\boldsymbol{A }(\xi) = \boldsymbol{g} ^{\dag}  (\xi) \boldsymbol{g}  (\xi)$, and $\Delta \hat h_j \equiv \hat h_j - \braket{\psi _{i} | \Delta \hat h_j |\psi _{i} }$. 
The expectation values $\braket{\psi _{i} | \Delta \hat h_a \Delta \hat h_b  |\psi _{i} }$ are also known as connected correlators~\cite{fetter2012quantum} and naturally arise in standard phase estimation~\cite{Helstrom1969}: the QFI in estimating a phase $\varphi$ of $e ^{-i\varphi \hat h} \ket{\psi}$ is $\propto \braket{\psi |\Delta \hat h ^{2} | \psi}$. Thus, intuitively, Eq.~\eqref{eq:qfi.avg.max} can be viewed as a weighted sum of the QFIs for standard phase estimation with generators given  by $\hat h_a $, where the weighting coefficients are set by the differentiation of the dephasing Lindbladian coefficient matrix with respect to $\xi$.

In the presence of multiple parameters $\{ \xi _{\ell} \} $, Eq.~\eqref{eq:qfi.avg.max} can be straightforwardly generalized to a QFI matrix with respect to $\{ \xi _{\ell} \} $ by extending $[\partial _{\xi}  \boldsymbol{g}  (\xi)]  [\partial _{\xi}  \boldsymbol{g} ^{\dag} (\xi)]$ to a set of products of partial derivatives of matrices $\boldsymbol{g}  (\xi)$, $[\partial _{\xi_{\ell}}  \boldsymbol{g}  (\xi)]  [\partial _{\xi_{j}}  \boldsymbol{g} ^{\dag} (\xi)]$. In this case, the matrix analog of $f_Q (\ketbra{\psi _{i}} ) $ also provides a lower bound on the MSE in estimating any linear combination of $\{ \xi _{\ell} \} $, generalizing Theorem~\ref{thm:opt.sep}~\cite{Hayashi2005,Eldredge2018,Ehrenberg2023} (see Sec.~\ref{suppsec:multiparams.qfi} for details). However, this lower bound may not be achievable in the presence of incompatible measurements and/or nuisance parameters~\cite{DemkowiczD2016,Hayashi2020}; deriving a tight lower bound in a general multiparameter noise estimation problem is an open problem under active research~\cite{Brady2026:noiseQSN, Gong2026:ScrambleSensing}.

As an illustrative example, we consider the task of estimating power-law noise correlations in Eq.~\eqref{eq:Cmat.x.alpha}, whose dissipation coefficient matrix satisfies
\begin{align}
\label{eq:Amat.x.alpha}
A _{jj'}(\xi) = \xi |j - j' | ^{-\alpha} 
, \quad 
j \ne j'
,
\end{align}
and the diagonal matrix element $A _{jj}(\xi) \propto \xi$ is a positive constant independent of $N$.
Specifically, we assume the exponent $\alpha>0$ is known, and we want to estimate the overall noise strength $\xi$. Intuitively, we expect the optimal entangled initial state of the sensors to be a GHZ-type state (up to an arbitrary relative phase shift between the two basis states)
\begin{align}
\label{eq:ghz.def}
\ket{\Psi_{\text{GHZ}}} = \frac{1}{\sqrt{2}} (\otimes _{a} \ket {h _{a,\max}} + \otimes _{a} \ket {h _{a,\min}})
\end{align}
consisting of an equal superposition between the simultaneous eigenstates of all $\hat h_a $ generators with maximal and minimal eigenvalues $h _{a,\max}$ and $h _{a,\min}$~\cite{Maccone2006}. The nature of entanglement advantage thus depends on whether an entangled state leads to a QFI that surpasses the performance of unentangled states, whose optimal time-averaged QFI scales as $\Theta ( N ) $ and can be achieved using a product sensor state $\propto \otimes _{a} ( \ket {h _{a,\max}} + \ket {h _{a,\min}}) $. For this problem, we have
\begin{theorem}
\label{thm:ent.R.white}
The quantum advantage in estimating $\xi$ in Eq.~\eqref{eq:Amat.x.alpha} satisfies 
\begin{align}
\mathcal{R} = 
\begin{cases} 
    \Theta(N^{1 - \alpha} )  & \text{for } \alpha < 1, \\ 
    \Theta ( \log N ) & \text{for } \alpha = 1, \\ 
    O (1) & \text{for } \alpha > 1,
\end{cases}
\end{align}
where we assume $N$ sensors equally distributed in real space.
\end{theorem}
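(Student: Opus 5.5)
Theorems~\ref{thm:opt.ent} and~\ref{thm:opt.sep} reduce the problem to computing $\mathcal{R} = \max_{\hat\rho_i} f_Q(\hat\rho_i)/\max_{\hat\rho_i\in\{\hat\rho_{\mathrm{sep}}\}} f_Q(\hat\rho_i)$ with $f_Q$ given by Eq.~\eqref{eq:qfi.avg.max}. Because $\boldsymbol{A}(\xi)=\xi \boldsymbol{A}_0$ is linear in $\xi$, we can take $\boldsymbol{g}(\xi)=\sqrt{\xi}\,\boldsymbol{A}_0^{1/2}$. Then $\partial_\xi\boldsymbol{g}\,\partial_\xi\boldsymbol{g}^\dagger = \boldsymbol{A}_0/(4\xi)$, and
\begin{align}
f_Q(\ketbra{\psi_i}) = \frac{\gamma}{2\xi}\sum_{ab} (A_0)_{ab}\braket{\psi_i|\Delta\hat h_a\Delta\hat h_b|\psi_i}.
\end{align}
Because $f_Q$ is convex under mixing (QFI is convex), the maximum over mixed states is attained on pure states; in the separable case the same holds, with the maximum attained on product pure states.

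\textbf{Denominator.} For a product state, $\braket{\Delta\hat h_a\Delta\hat h_b}=0$ whenever $a\neq b$. Only the diagonal terms survive, giving $\sum_a (A_0)_{aa}\mathrm{Var}(\hat h_a)$. This is maximized by putting each sensor in $(\ket{h_{a,\max}}+\ket{h_{a,\min}})/\sqrt2$, and the result is $\Theta(N)$ because the diagonal entries are $N$-independent constants.

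\textbf{Numerator.} For the upper bound, write $\boldsymbol{A}_0=\sum_k\lambda_k \boldsymbol{v}_k\boldsymbol{v}_k^\dagger$. Then $f_Q\propto \sum_k\lambda_k \mathrm{Var}(\sum_a v_{k,a}\hat h_a)$. A cruder route is enough: bound $|\braket{\Delta\hat h_a\Delta\hat h_b}|\le \sqrt{\mathrm{Var}\,\hat h_a\,\mathrm{Var}\,\hat h_b}\le \|\hat h\|^2$, where $\|\hat h\|$ is the half spectral width. This gives
\begin{align}
f_Q\le \frac{\gamma}{2\xi}\|\hat h\|^2\sum_{ab}|(A_0)_{ab}|.
\end{align}
The remaining sum evaluates to
\begin{align}
\sum_{ab}|(A_0)_{ab}| = N A_{0,\mathrm{diag}} + 2\sum_{d=1}^{N-1}(N-d)d^{-\alpha},
\end{align}
which is $\Theta(N^{2-\alpha})$ for $\alpha<1$, $\Theta(N\log N)$ for $\alpha=1$, and $\Theta(N)$ for $\alpha>1$. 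These follow from the standard integral comparison of $\sum_d d^{-\alpha}$ and $\sum_d d^{1-\alpha}$.

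For the matching lower bound when $\alpha\le1$, use the GHZ state~\eqref{eq:ghz.def}. Its connected correlators are $\braket{\Delta\hat h_a\Delta\hat h_b} = \tfrac14 (h_{a,\max}-h_{a,\min})(h_{b,\max}-h_{b,\min})>0$ for all $a,b$. Every off-diagonal element of $\boldsymbol{A}_0$ is positive, so every term in the sum is positive and $f_Q$ is within a constant of $\sum_{ab}(A_0)_{ab}$. This saturates the upper bound up to constants. Dividing by $\Theta(N)$ then gives $\Theta(N^{1-\alpha})$ and $\Theta(\log N)$ for $\alpha<1$ and $\alpha=1$, and $O(1)$ for $\alpha>1$.

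\textbf{Main obstacle.} The delicate point is the uniformity of the constants. The upper bound uses identical spectral widths of the $\hat h_a$, or at least uniformly bounded ones, and the GHZ lower bound needs the widths to be bounded below. I will assume the sensors are identical, as is implicit in the lattice setup. A second care point is that the regularization $c_0$ only affects the diagonal and nearest entries, which contribute $O(N)$ and do not change the asymptotics. Finally, positivity of the off-diagonal entries is what prevents sign cancellations in the GHZ correlator sum.
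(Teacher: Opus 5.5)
Your proposal is correct and follows essentially the same route as the paper: reduce via Theorems~\ref{thm:opt.ent} and~\ref{thm:opt.sep} to comparing $\max f_Q$ over all pure states with $\max f_Q$ over product pure states, use $[\partial_\xi\boldsymbol{g}][\partial_\xi\boldsymbol{g}^\dagger]\propto\boldsymbol{A}_0$, note that product states kill the off-diagonal connected correlators (giving $\Theta(N)$), bound the entangled case by Cauchy--Schwarz and $\sum_{ab}|A_{ab}|$, and saturate that bound with the GHZ state. Your two small deviations are improvements rather than gaps. First, you justify restricting to pure (product) states by convexity of the QFI itself, which is the sound argument; the covariance sum the paper calls convex is actually concave in $\hat\rho$ (the classical mixture of $\otimes_a\ket{h_{a,\max}}$ and $\otimes_a\ket{h_{a,\min}}$ has GHZ-sized covariance but zero QFI rate). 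Second, your prefactor $\boldsymbol{A}_0/(4\xi)$ is the correct one, whereas the paper's $\boldsymbol{A}(\xi)/(4\xi)$ is off by a factor of $\xi$, which does not affect the $N$-scaling.
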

See Sec.~\ref{subsec:ent.adv.white.pl} for a detailed proof. Intuitively, the optimal entangled sensors' QFI scales as $N \sum _{j=1} ^{N} j ^{-\alpha} $, as all cross-correlations in Eq.~\eqref{eq:qfi.avg.max} can contribute to the QFI for a GHZ-type initial state. In contrast, $N$ unentangled sensors always achieves a QFI of $\Theta(N)$. Taking the ratio between those two factors produces the scaling of entanglement advantage in Theorem~\ref{thm:ent.R.white}. We note that the QFI of $N$ sensors initialized in a GHZ-type state [Eq.~\eqref{eq:ghz.def}] can always be achieved by either a projective measurement given by projectors $\{\ketbra{\Psi_{\text{GHZ}}}, \mathbb{I} - \ketbra{\Psi_{\text{GHZ}}} \} $ at the end of the protocol, or alternatively, by measuring the effective Pauli-$X$ operator $\ketbra{h _{a,\max}}{h _{a,\min}} + \text{H.c.}$ on individual sensors and multiplying the oucomes.

{\it Entanglement advantage in estimating power-law spatiotemporal noise correlations.---}While Markovian noise commonly arises in environments with short correlation times and is quite ubiquitous in fluctuations of an electromagnetic origin~\cite{oqs2002book}, there are many scenarios where spatially correlated noise accompanies nontrivial temporal correlation. For example, systems near a critical point can exhibit power-law correlations in both spatial and temporal domains~\cite{Halperin1977}. This motivates us to consider the impact of temporal correlations on the nature of entanglement advantage. Specifically, as we show below, nontrivial temporal correlations can have a large impact on the existence and nature of entanglement enhancement in sensing noise correlations.

For simplicity, we again focus on the case where the spatial and temporal components of the noise correlation factorize~\footnote{The conclusion in this section can be generalized to estimating the overall noise strength of spatiotemporal correlations with more complex structure, as long as certain spatial modes host a power-law noise spectral density.}. Specifically, we are interested in power-law temporal correlations, which may naturally accompany the long-range spatial correlation in strongly correlated systems. In this case, it is convenient to transform to the Fourier domain by defining the noise spectral density function:
\begin{align}
S [ \omega ; x,x'] \equiv \int ^{+\infty } _{-\infty} C (x,x'; \tau ) e ^{i \omega \tau } d \tau
.
\end{align}
We focus on noise spectral densities with power-law spatiotemporal correlations of the following form:
\begin{align}
\label{eq:Cmat.1overf}
S [ \omega ; x,x'] = \tilde{\xi} |\omega| ^{-p} |x - x' | ^{-\alpha} c_{0} (x-x') 
,
\end{align}
with $p \in [0,3) $~\footnote{We choose the value of $p$ such that the sensor dynamics does not rely on the details of the low-frequency cutoff $c_{0}(\cdot)$; see Sec.~\ref{subsec:qfi.1qb.color} of the Supplemental Material for details.}. When $p \sim 1$, the stochastic signal is also known as $1/f^{p}$ noise and describes the dominant noise source in many solid-state and complex systems~\cite{Schriefl2006,Yacoby2013,Altshuler2014}. Similar to the Markovian case, we again consider an array of quantum sensors arranged in a lattice with $x_j = jx_{0} $, and our goal is again to estimate $ \tilde{\xi} $.

For the temporally correlated noise in Eq.~\eqref{eq:Cmat.1overf}, Theorems~\ref{thm:opt.ent} and~\ref{thm:opt.sep} no longer apply because the noise field is now heavily concentrated towards the low-frequency regime. In this case, a fast-reset protocol becomes suboptimal whenever $p>0$. For example, consider a single-qubit sensor subject to non-Markovian noise with spectral density $S [ \omega ] = S_{0} |\omega| ^{-p} $. Suppose we initialize the sensor in $\hat \rho_{i} = \ketbra{+} $ and evolve it under the target non-Markovian noise field for time $t$, with a single control $\pi$-pulse applied at $t/2$. The time-evolved sensor state $ \hat \rho (t) = \overline{\hat U (t) \hat \rho_{i} \hat U ^{\dag} (t)  }$ defined in Eq.~\eqref{eq:rho.xi.def} thus satisfies $ \braket{\uparrow \! | \hat \rho (t) | \! \downarrow } = \frac{1}{2}\exp (- \mathcal{C}_{p} S _{0} t ^{1+p} ) $~\cite{Shnirman2002,DasSarma2008,Clerk2021}, where $\mathcal{C}_{p} = \frac{2^{1-p}-1}{\pi} \Gamma(-1-p) \sin \frac{p\pi}{2}$ is a constant determined by exponent $p$. When $p>0$, the stretched-exponential decay in the sensor-qubit coherence implies that the time-averaged QFI is initially $0$ and achieves its maximum at a finite evolution time $t _{\mathrm{opt}} $. This indicates that the minimal MSE still scales as $\Theta( T _{\mathrm{tot}}  ^{-1}) $ [Eq.~\eqref{eq:mse.opt.tavg.qfi}]. Thus, a standard quantum limit persists even for estimating the strength of non-Markovian noise with a power-law noise spectral density~\footnote{We note that the quantum noise sensors' optimal performance shows a very distinct scaling from that studied in earlier works for continuously tracking a stochastic phase with a power-law spectrum~\cite{Wiseman2013}.}

More generally, we expect that a protocol evolving sensors under target noise for a finite amount of time (before measuring them) may be more efficient than a fast-reset strategy in accumulating signal, which also directly impacts the nature of entanglement advantage in this problem. This intuition is made rigorous by the following result.
\begin{theorem}
\label{thm:ent.R.1overf}
Consider $N$ qubit sensors for estimating $\tilde{\xi}$ in Eq.~\eqref{eq:Cmat.1overf} and a sensor control $\pi$-pulse sequence consisting of a set of instantaneous $\pi$-pulses applied simultaneously on all qubit sensors at $\{ \theta_{j} t_{\mathrm{s}} \} $, where $t_{\mathrm{s}}$ is the total signal accumulation time per shot, and the $j$-th pulse is applied at $\theta_{j} t_{\mathrm{s}}$. For any specific set of $\theta_{j}$, the optimal sensing protocol is to initialize the sensors in a GHZ state and choose $t_{\mathrm{s}} =t _{\mathrm{opt}} $ before detection, where the optimal evolution per shot $t _{\mathrm{opt}} $ satisfies $\tilde{\xi} ( t _{\mathrm{opt}} N ^{\frac{2-\alpha}{1+p}} ) ^{1+p}  = \mathcal{A}$, and $\mathcal{A}$ is a constant independent of $\tilde{\xi}$.  $\mathcal{A} \ne 0$ when $p>0$. The entanglement advantage now satisfies
\begin{align}
\label{eq:ent.R.color}
\mathcal{R} = 
\begin{cases} 
    \Theta(N^{\frac{1- \alpha-p}{1+p}} )  & \text{for } \alpha + p < 1, \\ 
    \Theta ( \log N ) & \text{for } \alpha + p = 1, \\ 
    O (1) & \text{for } \alpha + p > 1.
\end{cases}
\end{align}
The choice of $\{\theta_j\}$ affects the prefactor in Eq.\ \eqref{eq:ent.R.color} but not the scaling with $N$.
\end{theorem}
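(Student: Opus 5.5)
Under a fixed $\pi$-pulse sequence $\{\theta_j t_{\mathrm{s}}\}$ applied to all qubits, every qubit acquires the same filter function $F(\omega t_{\mathrm{s}})$. The Gaussian average therefore gives a dephasing map that is diagonal in the computational basis. Its coherence between basis states $\mathbf{s},\mathbf{s}'\in\{\pm1\}^N$ is
\begin{align}
\label{eq:proposal.coh}
\rho_{\mathbf{s}\mathbf{s}'}(t_{\mathrm{s}}) = \rho_{\mathbf{s}\mathbf{s}'}(0)\exp\!\Big[-\tfrac{1}{8}\mathcal{C}_{p}[\theta]\,\tilde{\xi}\, t_{\mathrm{s}}^{1+p}\, (\mathbf{s}-\mathbf{s}')^{T}\boldsymbol{K}(\mathbf{s}-\mathbf{s}')\Big].
\end{align}
Here $K_{jj'}=|j-j'|^{-\alpha}c_0$. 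The factor $t_{\mathrm{s}}^{1+p}$ follows from rescaling $\omega\to\omega/t_{\mathrm{s}}$ in $\int d\omega\, |\omega|^{-p}F(\omega t_{\mathrm{s}})/\omega^{2}$, and $\mathcal{C}_p[\theta]>0$ depends only on $\{\theta_j\}$ and $p$. For $p<3$ this integral converges, so the cutoff does not matter. The key structural point is that $\tilde{\xi}$ and $t_{\mathrm{s}}$ enter only through the combination $u\equiv \tilde{\xi}t_{\mathrm{s}}^{1+p}$. The channel is thus a classical-correlation dephasing channel with rate $u$. The problem then reduces to maximizing $\mathcal{F}_Q/t_{\mathrm{s}}$ over initial states and $t_{\mathrm{s}}$.

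\emph{Upper bound for arbitrary states.} Write $\mathcal{F}_Q$ with respect to $\tilde{\xi}$ as $(u/\tilde{\xi})^2\mathcal{F}_Q^{(u)}$. The channel is a mixture of unitaries $e^{-i\phi\cdot \hat Z/2}$ with Gaussian $\phi$ of covariance $\propto u\boldsymbol{K}$. Convexity of the QFI, or equivalently a purification/classical-Fisher bound on the Gaussian-phase channel, gives the following. For any state, $\mathcal{F}_Q^{(u)}\lesssim \min\{\lambda_{\max}(\boldsymbol K)^2 \cdot\text{const},\,1/u^2\}$-type behavior. More concretely, the coherence-based Fisher information is bounded by $\max_{\Delta}\Delta^2 e^{-\Delta u}$-like terms with $\Delta=(\mathbf{s}-\mathbf{s}')^T\boldsymbol K(\mathbf{s}-\mathbf{s}')/8\le \Delta_{\max}$. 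This yields $\mathcal{F}_Q/t_{\mathrm{s}}\le \tilde{\xi}^{-2}\,t_{\mathrm{s}}^{-1}\,G(\Delta_{\max}u)$ with a universal function $G$. Substituting $t_{\mathrm{s}}=(u/\tilde{\xi})^{1/(1+p)}$ and optimizing over $u$ puts the optimum at $u\Delta_{\max}=\mathcal{A}$. This gives $\max\mathcal{F}_Q/t_{\mathrm{s}}\propto \tilde{\xi}^{-2-1/(1+p)}\Delta_{\max}^{1/(1+p)}$. The maximum of $\Delta$ is attained at $\mathbf{s}=-\mathbf{s}'=\mathbf{1}$, that is, by the GHZ state, with $\Delta_{\max}=\sum_{jj'}K_{jj'}\sim N\sum_{r\le N} r^{-\alpha}$. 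This is $\Theta(N^{2-\alpha})$ for $\alpha<1$, which reproduces $t_{\mathrm{opt}}N^{(2-\alpha)/(1+p)}$ and the condition $\tilde{\xi}t_{\mathrm{opt}}^{1+p}N^{2-\alpha}=\mathcal{A}$. The constant $\mathcal{A}$ is nonzero for $p>0$ because $t_{\mathrm{s}}^{-1}G$ behaves like $u^{2-1/(1+p)}$ at small $u$, which rules out the fast-reset limit. For the GHZ state the QFI is exactly the two-level expression, so the bound is saturated. I expect the main obstacle to be making this ``GHZ is optimal for every $\{\theta_j\}$'' step rigorous. The state-independent bound must be dominated by the largest coherence exponent, and I would prove this through the Gaussian-mixture (classical phase) representation and monotonicity of $\Delta^2e^{-\Delta u}$ near the optimum.

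\emph{Separable states and the ratio.} For product states, I would use the same extended-channel/convexity argument restricted to separable inputs. As in Theorem~\ref{thm:opt.sep}, the relevant quantity is then controlled by single-qubit coherences and pairwise connected correlators. In effect, the best separable strategy behaves like $N$ independent sensors with $\Delta\sim\Theta(1)$, plus cross terms whose contribution is bounded by $O(\sum_{r}r^{-\alpha})$ times a factor that is at most $O(1)$ per sensor after optimizing $u$ at the $\Theta(1)$ scale. This gives $\max_{\mathrm{sep}}\mathcal{F}_Q/t_{\mathrm{s}}=\Theta(N)\,\tilde{\xi}^{-2-1/(1+p)}$. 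Taking the ratio gives $\mathcal{R}\sim \Delta_{\max}^{1/(1+p)}/N$. This equals $N^{(2-\alpha)/(1+p)-1}=N^{(1-\alpha-p)/(1+p)}$ for $\alpha<1$. For $\alpha\ge1$, $\Delta_{\max}=\Theta(N\log N)$ or $\Theta(N)$, and a careful treatment of the marginal case gives the stated $\log N$ and $O(1)$ regimes. Finally, $\{\theta_j\}$ enters only through $\mathcal{C}_p[\theta]$, which rescales $u$ and hence only the prefactor.
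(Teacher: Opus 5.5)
\textbf{There is a genuine gap: the upper bound over all states is false in part of the parameter range.} Your reduction to the single combination $u=\tilde{\xi}t_{\mathrm{s}}^{1+p}$ is sound and agrees with the paper. So is your GHZ computation: a two-level QFI, optimum at $\Delta_{\max}u=\mathcal{A}$, and time-averaged QFI $\propto\Delta_{\max}^{1/(1+p)}$.

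The problem is the claim that, for every input, the QFI is governed by the largest coherence exponent, so that $\max_{\hat\rho}\mathcal{F}_Q/t_{\mathrm{s}}\propto\Delta_{\max}^{1/(1+p)}$. For $1-p<\alpha<1$ this would give $\Theta(N^{(2-\alpha)/(1+p)})=o(N)$ for \emph{all} states. That is below the $\Theta(N)$ you (and the paper) assign to product states, forcing $\mathcal{R}\to0$, which is impossible since separable states are among all states.

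The failure is structural:
\begin{itemize}
\item A state with many populated branches (e.g.\ a product state) collects QFI from exponentially many coherences with moderate exponents, and these contributions add.
\item $\Delta^2e^{-\Delta u}$ is not monotone in $\Delta$ at fixed $u$, so nothing forces the largest exponent to dominate.
\item The Gaussian-mixture (classical-phase) bound is $\mathrm{rank}(\boldsymbol{K})/(2u^2)\sim N/(2u^2)$, not $1/u^2$. Combining it with the Markovian rate bound only yields $\Delta_{\max}^{1/(1+p)}N^{p/(1+p)}$, which is not tight.
\end{itemize}
GHZ optimality can therefore hold at most for $\alpha+p<1$. That is how the paper restricts it; it declares product states optimal for $\alpha+p>1$. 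Proving it requires separating the collective-mode contribution from the $\Theta(N)$ contribution of the many $O(1)$ eigenmodes of $\boldsymbol{K}$.

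\textbf{How the paper's route differs.} The paper maps $\hat\rho(t_{\mathrm{s}})$ to a fictitious Markovian state at $t_{\mathrm{eff}}=\gamma^{-1}t_{\mathrm{s}}^{1+p}\mathcal{C}_{\{\theta_j\}}$. It rewrites $\mathcal{F}_Q/t_{\mathrm{s}}$ as the quasi-time-averaged QFI of Eq.~\eqref{seq:quasi.tavg.qfi}. It then applies the rate bound Eq.~\eqref{seq:qfi.rate.C} with convexity, separately over separable states and over all states. Your separable paragraph likewise needs this mapping before Theorem~\ref{thm:opt.sep}, which is a Markovian statement, can be cited. Be aware that the paper's step is itself brief: $\mathcal{F}_Q\le\mathcal{C}t_{\mathrm{eff}}$ alone does not cap $\mathcal{F}_Q/t_{\mathrm{eff}}^{1/(1+p)}$ at large $t_{\mathrm{eff}}$. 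Either route therefore needs an additional long-time bound on the QFI.

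\textbf{Your final case analysis does not produce the stated regimes.} The cases in Eq.~\eqref{eq:ent.R.color} are organized by $\alpha+p$, not by $\alpha$. For $p>0$, the line $\alpha+p=1$ has $\alpha=1-p<1$, so $\Delta_{\max}=\Theta(N^{1+p})$. Your own formula $\mathcal{R}\sim\Delta_{\max}^{1/(1+p)}/N$ then gives $\Theta(1)$, with no logarithm. At $\alpha=1$, $p>0$ it gives $(N\log N)^{1/(1+p)}/N\to0$. A $\log N$ arises only at $p=0$, so no ``careful treatment of the marginal case'' $\alpha\ge1$ recovers it. The paper's $\Theta(N\log N)$ entry at $\alpha+p=1$ is asserted rather than derived from its $N^{(2-\alpha)/(1+p)}$ scaling. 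You should flag this discrepancy rather than claim to reproduce it.

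\textbf{Two smaller slips.}
\begin{itemize}
\item The optimized prefactor is $\tilde{\xi}^{-2+1/(1+p)}=\tilde{\xi}^{-(1+2p)/(1+p)}$, matching Eq.~\eqref{seq:qfiovert.rescale}, not $\tilde{\xi}^{-2-1/(1+p)}$.
\item For GHZ with coherence $e^{-\kappa u}$ and $\kappa\propto\Delta_{\max}$, one has $\mathcal{F}_Q^{(u)}=\kappa^2/(e^{2\kappa u}-1)\approx\kappa/(2u)$ at small $u$. So $G(v)\sim v$ and $t_{\mathrm{s}}^{-1}G\sim u^{p/(1+p)}$, not $u^{2-1/(1+p)}$. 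Your conclusion $\mathcal{A}\neq0$ for $p>0$ survives, but your stated exponent would wrongly rule out fast reset even at $p=0$, contradicting Theorems~\ref{thm:opt.ent} and~\ref{thm:opt.sep}.
\end{itemize}
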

Note that, when $p=0$, the noise spectrum in Eq.~\eqref{eq:Cmat.1overf} becomes Markovian, and we recover Theorem~\ref{thm:ent.R.white}. However, when the noise is non-Markovian and has spectral density concentrating near the low-frequency end, the effective decay rate of a sensor prepared in a GHZ-type state [Eq.~\eqref{eq:ghz.def}] acquires an extra rescaling factor $N ^{\frac{2-\alpha}{1+p}}$. Heuristically, this is due to the fact that the sensors' evolution is mathematically equivalent to the evolution they would have under Markovian noise except for the substitution $t \to t^{1+p} $ (see Sec.~\ref{subsec:qfi.Nqb.color} for details). As a result, the optimal time-averaged QFI of entangled non-Markovian noise sensors is adjusted to $\Theta(N ^{\frac{2-\alpha}{1+p}})$ accordingly. In contrast, the unentangled sensors' characteristic dephasing rate is still $O(1)$ due to lack of entanglement, leading to an $\Theta(N)$ optimal time-averaged QFI. 
A comparison between the optimal time-averaged QFIs of entangled and unentangled sensors thus gives rise to an entanglement advantage of $ \mathcal{R}  =\Theta(N ^{\frac{2-\alpha}{1+p}-1})$ when $\alpha+p<1$, as quoted in the first line in Eq.~\eqref{eq:ent.R.color} in Theorem~\ref{thm:ent.R.1overf}. Surprisingly, this $N$-dependent rescaling in the effective decay rate of entangled sensors also removes any scalable entanglement advantage that one would have for sensing spatially correlated noise when $\alpha \in (1-p,1)$.

{\it Outlook.---}In this work, we proved the fundamental quantum limits and entanglement advantage in sensing power-law spatiotemporal noise correlations.
Our work illustrates a rich interplay between spatial and temporal noise correlation in quantum noise sensors, and how colored temporal correlation may impact their optimal performance. For future work, it would be interesting to derive the fundamental quantum limits in estimating more general spatiotemporal noise structures beyond a power-law functional dependence, e.g., spatially-correlated noise that contains sparse long-range correlations~\cite{YKLiu2021}, and correlation functions where the temporal and spatial dependences do not factorize~\cite{DeLuca2020}. One practically relevant target involves noise spectrum formed by a sum of (spatially dependent) Lorentzian peaks. It would also be exciting to extend our results to genuinely multiparameter problems: as mentioned, while a matrix version of Eq.~\eqref{eq:qfi.avg.max} readily bounds the minimal MSE in multiparameter estimation, the attainability of such bounds remains an open problem in quantum metrology. Another exciting open direction concerns function estimation~\cite{Eldredge2018,Ehrenberg2023}, where the correlation function depends on multiple unknown parameters, and we are interested in estimating one or multiple functions of those parameters.

More broadly, spatiotemporally correlated fluctuations provide a useful target in a variety of quantum metrology and learning tasks. In continuous metrology, where the readout is constrained to continuous weak measurements~\cite{Molmer2014,Molmer2016,Plenio2022,Plenio2023,Guta2023,Clerk2025,Gopalakrishnan2025,Binder2026}, it would be interesting to study the impact of entanglement on the efficiency of extracting information about spatial correlation and spectral properties of a target. Another intriguing open question concerns learning the structure of the spatiotemporal correlations of the target, e.g., extracting information about $\alpha$ and $p$ in Eq.~\eqref{eq:Cmat.1overf}. Last but not least, understanding the fundamental limits of reconstructing spatiotemporal noise correlation structures with quantum hypothesis testing~\cite{Belavkin1975,Verstraete2007,Tsang2012,Lupo2021} and quantum channel learning~\cite{Wilde2026} may also open up novel forms of entanglement advantages relevant to practical benchmarking of quantum devices.

{\it Acknowledgments.---}We thank Wenbo Sun, Sisi Zhou, Ana Asenjo-Garcia, Cosimo Rusconi, Eric Sierra, Kaixin Huang, and Yi-Kai Liu for helpful discussions.
Y.-X.W.~acknowledges support from a QuICS Hartree Postdoctoral Fellowship. 
A.J.B.\@ acknowledges support from the NRC Research Associateship Program at NIST. A.V.G.~was supported in part by  ONR MURI, DoE ASCR Quantum Testbed Pathfinder program (award No.~DE-SC0024220), NSF QLCI (award No.~OMA-2120757), NSF STAQ program, AFOSR MURI, ARL (W911NF-24-2-0107), and NQVL:QSTD:Pilot:FTL. A.V.G.~also acknowledges support from the U.S.~Department of Energy, Office of Science, National Quantum Information Science Research Centers, Quantum Systems Accelerator (award No.~DE-SCL0000121) and from the U.S.~Department of Energy, Office of Science, Accelerated Research in Quantum Computing, Fundamental Algorithmic Research toward Quantum Utility (FAR-Qu).
This material is based in part upon work supported by the U.S. Department of Energy Office of Science National Quantum Information Science Research Centers as part of the Q-NEXT center.

\clearpage

\onecolumngrid
\begin{center}
\textbf{\large Supplemental Material: Entanglement advantage in sensing power-law spatiotemporal noise correlations}
\end{center}

\setcounter{equation}{0}
\setcounter{figure}{0}
\setcounter{page}{1}
\renewcommand{\theequation}{S\arabic{equation}}
\renewcommand{\thefigure}{S\arabic{figure}}

\makeatletter
\let\addcontentsline\latex@addcontentsline
\makeatother

\tableofcontents

\subsection{Equivalence between quantum sensor dynamics generated by a pure-dephasing Markovian quantum environment and dynamics due to classical white noise}

\label{subsec:noise.models}

In this section, we show that the sensor model considered in the main text also describes evolution of a quantum sensor probing a pure-dephasing quantum environment. Specifically, we consider two classes of sensor-signal Hamiltonians, describing the sensor's interaction with a classical stochastic process $B (x,t) $ and with a quantum bath with (spatially dependent) operators $\hat B (x,t)$, respectively:
\begin{align}
\label{seq:Hint.tot}
\hat H _{\text{cl}} (t)= \sum _{j=1} ^{N} B (x _{j},t) \hat h _{j} 
, \quad 
\hat H _{\text{qu}} (t)= \sum _{j=1} ^{N} \hat B (x _{j},t) \hat h _{j}
.
\end{align}
We assume that the classical stochastic process satisfies Gaussian statistics, so that it can be fully captured by its two-point correlation function
\begin{align}
\label{seq:qubath.corr.def}
C (x,x'; t-t' ) \equiv \overline{\delta B (x,t) \delta B (x',t') }
,\quad 
\delta B (x,t) = B (x,t) - \overline{ B (x,t)}
.
\end{align}
For the quantum bath, we assume that either the bath is Gaussian (i.e., the intrinsic bath Hamiltonian is quadratic in certain bosonic raising and lowering operators, and the bath is linearly coupled to the sensor), or the bath is weakly coupled to the sensor. We note that, in considering $\hat H _{\text{qu}} (t)$, the bath can generally represent a quantum many-body system, where $\hat B (x _{j},t)$ describes a spatially dependent bath operator (e.g., polarization, excitation) coupled to the sensor. 
In this case, the environmental influence on the sensor can be fully captured by second-order correlation functions of the bath~\cite{oqs2002book}. Specifically, we introduce the (anti-)symmetrized quantum bath second-order correlation functions $C _{\mathrm{sym}} (x,x'; t-t' )$ ($C _{\mathrm{as}} (x,x'; t-t' )$), which can be defined in terms of the Heisenberg-picture bath operator $\hat B (x,t)$ and the bath (steady) state $\hat \rho _{\text{bath}}$ as 
\begin{align}
& C _{\mathrm{sym}} (x,x'; t-t' ) = \frac{1}{2}\text{Tr}  [ \{\delta \hat B (x,t), \delta \hat B (x',t') \} \hat \rho _{\text{bath}}  ]
, \quad 
& C _{\mathrm{as}} (x,x'; t-t' ) =  \text{Tr}  ( [ \hat B (x,t), \hat B (x',t') ] \hat \rho _{\text{bath}}  )
,
\end{align}
where $\delta \hat B (x,t) $ denotes the fluctuation operator of the bath noise operator:
\begin{align}
\delta \hat B (x,t) = \hat B (x,t) - \text{Tr}  [ \hat B (x,t)\hat \rho _{\text{bath}}  ]
.
\end{align}
As we show below, when the classical noise correlation matrix of $B (x,t) $ and the symmetrized correlation function of $\hat B (x,t)$ match, the sensor's dephasing dynamics generated by the two respective noise sources become identical with each other. In other words, for the purpose of analyzing the performance of the quantum sensor of noise, it suffices to focus on the corresponding (second-order) correlation function, and we could ignore other microscopic details of the noise source. In all derivations in this section, we assume the correlation function of the classical noise and the quantum bath depend on an unknown parameter $\xi$. However, we omit the $\xi$-dependence when formally writing the target's correlation function when this does not cause confusion.

We first rewrite the quantum sensor dynamics under the classical noise field, where the sensor-signal interaction is described by $\hat H _{\text{cl}} (t)$ in Eq.~\eqref{seq:Hint.tot}. In this case, the density matrix of the quantum sensor is given by (see Eq.~\eqref{eq:rho.xi.def} in the main text)
\begin{align} 
\hat \rho  = \overline{\hat U (t) \hat \rho_{i} \hat U ^{\dag} (t)  }
, \quad 
\hat U (t) =\mathcal{T} e ^{-i \int ^{t}_{0} \hat H _{\text{I}} (t')  dt'},
\end{align}
where $\hat H _{\text{I}} (t) = H _{\text{cl}} (t) + \hat H _{\text{c}} (t) $, and the control Hamiltonian $\hat H _{\text{c}} (t)$ does not depend on the unknown parameter $\xi$. When the classical stochastic process is Markovian, we can further introduce a dephasing coefficint matrix $\boldsymbol{A} $ via [c.f.~Eq.~\eqref{eq:def.autocorr} in the main text]
\begin{align} 
C (x_{j},x_{j'};t-t' ) = \gamma \delta (t-t') A _{jj'} 
.
\end{align}
In this regime, the sensor dynamics can be equivalently described by the following Lindblad master equation~\cite{oqs2002book,Frank2005}:  
\begin{align}
\label{seq:qme.clbath}
\mathcal{L} _{\xi } (t) [\hat\rho] =
- i [\hat H _{\text{c}} (t) ,  \hat\rho] 
+ \frac{\gamma}{2} \sum _{j,\ell =1} ^{n} A _{j\ell}  ( \hat h_\ell \hat\rho \hat h_j  -\frac{1}{2} \{ \hat h_j   \hat h _\ell , \hat \rho\}) 
.
\end{align}

Similarly, for a Markovian quantum mechanical environment, we define the dephasing coefficint matrix $\boldsymbol{A} $ via the bath's symmetrized correlation function as 
\begin{align}
C _{\mathrm{sym}} (x_{j},x_{j'}; t-t' ) = \gamma \delta (t-t') A _{jj'} 
.
\end{align}
The fluctuation-dissipation theorem relates the symmetrized correlation function to the antisymmetrized one, $ C _{\mathrm{sym}} (x,x'; t-t' ) $~\cite{Kubo1966}. If the bath is Markovian and in thermal equilibrium, this would require $ C _{\mathrm{as}} (x,x'; t-t' ) =0 $; for a nonequilibrium quantum environment, in principle we may have a nonzero $ C _{\mathrm{as}} (x,x'; t-t' ) \propto \delta (t-t') $, which would give rise to an anti-Hermitian contribution to the sensor's dephasing coefficient matrix. In our following discussion, we assume such contributions are negligible compared with $\boldsymbol{A} $, which holds for typical quantum Markovian environments~\cite{oqs2002book}. We can again rewrite the sensor dynamics in terms of the following Lindblad master equation~\cite{oqs2002book}, 
\begin{align}
\label{seq:qme.qubath}
\mathcal{L} _{\xi } (t) [\hat\rho] =
- i [\hat H _{\text{c}} (t) ,  \hat\rho] 
+ \frac{\gamma}{2} \sum _{j,\ell =1} ^{n} A _{j\ell} (\xi) ( \hat h_\ell \hat\rho \hat h_j  -\frac{1}{2} \{ \hat h_j   \hat h _\ell , \hat \rho\}) 
,
\end{align}
where $\hat H _{\text{c}} (t)$ denotes an additional control Hamiltonian.

Comparing Eq.~\eqref{seq:qme.clbath} with Eq.~\eqref{seq:qme.qubath}, we have shown that the sensor dynamics evolving under a quantum Markovian bath is mathematically equivalent to the dynamics under classical white noise, whenever the noise correlation function matches the bath's second-order symmetrized correlation function in Eq.~\eqref{seq:qubath.corr.def}.

\subsection{Optimality of fast-reset protocols in estimating pure-dephasing noise}

\label{suppsec:opt.fast.reset}

In this section, we provide a detailed proof of Theorems~\ref{thm:opt.ent} and~\ref{thm:opt.sep} in the main text, by deriving a tight upper bound on the time-averaged QFI achievable by quantum sensors of (spatially correlated) Markovian noise. Our proof allows entanglement with any number of ancillae and arbitrary quantum control on the joint sensor-ancilla system.

\textbf{Setup:} Consider a correlated Markovian dissipative environment whose impact on the sensor can be described by the following Lindbladian~\cite{oqs2002book,Frank2005}: 
\begin{align}
\mathcal{L} _{\xi }  \hat\rho =
\frac{\gamma}{2} \sum _{j,\ell =1} ^{n} A _{j\ell} (\xi) ( \hat L_\ell \hat\rho \hat L_j ^\dag-\frac{1}{2} \{ \hat L_j ^\dag \hat L _\ell , \hat \rho\}) 
.
\end{align}
The Lindbladian coefficient matrix $\boldsymbol{A} (\xi)$ depends on an unknown parameter $\xi \in \mathbb{R}$ whose matrix elements define the Lindbladian
\begin{align}
[\boldsymbol{A} (\xi) ] _{j\ell}  = A _{j\ell} (\xi) .
\end{align}
The matrices $\boldsymbol{A} (\xi)$ are positive semidefinite and can be complex. For convenience, we also introduce the superoperator $\partial _{\xi} \mathcal{L} _{\xi } $
\begin{align}
\partial _{\xi} \mathcal{L} _{\xi }  \hat\rho = \frac{\gamma}{2} \sum _{j,\ell =1} ^{n} 
\partial _{\xi} A _{j\ell} (\xi) ( \hat L_\ell \hat\rho \hat L_j ^\dag-\frac{1}{2} \{ \hat L_j ^\dag \hat L _\ell , \hat \rho\}) 
.
\end{align}
We can always define a matrix square root $\boldsymbol{g} (\xi) $ of the positive semidefinite matrix $\boldsymbol{A} (\xi)$ via the following equation:
\begin{align}
\boldsymbol{A} (\xi)  = \boldsymbol{g} ^{\dag} (\xi) \boldsymbol{g} (\xi) 
,
\end{align}
so that the matrix differential $\partial _{\xi} \boldsymbol{A} (\xi)$ is related to the factor matrix $\boldsymbol{g} (\xi) $ as 
\begin{align}
\label{neq:g.def.ptA}
\partial _{\xi} \boldsymbol{A} (\xi) = [\partial _{\xi} \boldsymbol{g} ^{\dag} (\xi)] \boldsymbol{g} (\xi) + \boldsymbol{g}  ^{\dag}  (\xi) \partial _{\xi} \boldsymbol{g} (\xi)
.
\end{align}

In what follows, we assume that the Lindbladian $\mathcal{L} _{\xi } $ involves pure-dephasing dynamics. More specifically, we assume that all the basis jump operators are Hermitian:
\begin{align}
\hat L_j ^\dag = \hat L_j 
,
\end{align}
and the Lindbladian coefficient matrix $\boldsymbol{A} (\xi)$ is real for all $\xi$. Specifically, when $\hat L_j = \hat h _{j} $, the Lindblad master equation describes the dynamics of a sensor under the Hamiltonian $\hat H _{\text{int}} (t)= \sum _{j=1} ^{N} B (x _{j},t) \hat h _{j} $, where the Lindbladian coefficient matrix $\boldsymbol{A} (\xi)$ is related to the noise correlation matrix $C (x,x'; t-t' ) \equiv \overline{\delta B (x,t) \delta B (x',t') }$ of the target signal via [c.f.~Eq.~\eqref{eq:def.autocorr} in the main text]
\begin{align} 
C (x_{j},x_{j'};t-t' ) = \gamma \delta (t-t') A _{jj'}(\xi) 
.
\end{align}

\textbf{Result:} Consider a pure-dephasing $\mathcal{L} _{\xi } $ with general Hermitian jump operators. For any given noise estimation protocol with an initial sensor-ancilla state $\hat \rho (t=0)$ and sensor-ancilla quantum control, which can be arbitrarily chosen, such that the sensor-ancilla system's time evolution can be described by $\hat \rho (t)$, the following upper bound on the time-averaged QFI with respect to $\xi$ holds: 
\begin{align}
\label{seq:qfi.avg.bound} 
\frac{\mathcal{F} _{Q} }{t }  \le 
& 2 \gamma \max _{ \tau } \sum _{ab} \left ( [\partial _{\xi}  \boldsymbol{g}  (\xi)]  [\partial _{\xi}  \boldsymbol{g} ^{\dag} (\xi)] \right ) _{ab}  \Tr\left ( \Delta \hat L_a ^\dag \Delta \hat L_b  \hat\rho (\tau)  \right ) 
,
\end{align}
where $t$ on the left-hand side denotes the total signal accumulation time per protocol run. Suppose the right-hand side is maximized at $\tau = \tau _{c}$, then a fast-reset protocol using $\hat \rho (\tau _{c}) $ as the initial sensor-ancilla state always achieves a time-averaged QFI that equals the upper bound on the right-hand side evaluated for the original protocol.

Note that the above result trivially includes the case without using any ancilla, when $\hat\rho (\tau)$ only act on the quantum noise sensors.

\textbf{Proof:} 
To prove Eq.~\eqref{seq:qfi.avg.bound}, we consider a general sensor-ancilla evolution parametrized by $\xi $ [see Eq.~\eqref{seq:qme.clbath}]:
\begin{align}
\mathcal{L} _{\xi,\mathrm{c} } (t) [\hat\rho] =
- i [\hat H _{\text{c}} (t) ,  \hat\rho] 
+ \frac{\gamma}{2} \sum _{j,\ell =1} ^{n} A _{j\ell} (\xi) ( \hat L_\ell \hat\rho \hat L_j ^\dag-\frac{1}{2} \{ \hat L_j ^\dag \hat L _\ell , \hat \rho\}) 
.
\end{align}
Here, $\hat H _{\text{c}} (t)$ denotes arbitrary quantum control acting on the sensor-ancilla system; $\hat H _{\text{c}} (t)$ can represent, for example, instantaneous control pulses, as long as the control does not depend on the target parameter $\xi $.
To compute the quantum Fisher information (QFI), we introduce the symmetric logarithmic derivative (SLD) $ \hat S (t)$ with respect to $\xi $ via the equation 
\begin{align}
\label{seq:symmLD.def}
\partial _{\xi } \hat\rho (t)  = \frac{1}{2} \{  \hat S (t),  \hat\rho (t)\}
,
\end{align}
which can be formally written as 
\begin{align}
\partial _{\xi } \hat\rho (t)= \frac{1}{2} \{  \hat S (t),  \hat\rho (t)\}
= \int ^{t} _{0} d \tau  \mathcal{T} e ^{\int ^{t} _{\tau } \mathcal{L} _{\xi,\mathrm{c} } (\tau _{1}) d \tau _{1} } (\partial _{\xi} \mathcal{L}_{\xi,\mathrm{c} }) \mathcal{T} e ^{\int ^{\tau} _{0 } \mathcal{L} _{\xi,\mathrm{c} } (\tau _{2}) d \tau _{2} }  \hat\rho (0)
. 
\end{align}
The QFI of state $\hat\rho (t) $ can be calculated via
\begin{align}
\label{seq:FQ.SymmLD2}
\mathcal{F} _{Q} = \Tr[\hat S (t) ^{2} \hat\rho (t) ]
,
\end{align}
whose time derivative can be derived as (noting that $\partial _{\xi} \mathcal{L}_{\xi,\mathrm{c}  } = \partial _{\xi} \mathcal{L}_{\xi }$) 
\begin{align}
\partial _{t} \mathcal{F} _{Q} & =  \Tr[(\partial _{t} \hat S (t)) \hat S (t) \hat\rho (t) + \hat S (t)  (\partial _{t} \hat S (t)) \hat\rho (t)  + \hat S (t) ^{2} \partial _{t}\hat\rho (t) ]
\\
& = \Tr[ \hat S (t) \partial _{t} (\hat S (t) \hat\rho (t) + \hat\rho (t) \hat S (t) ) - 2 \hat S (t) ^{2} \partial _{t}\hat\rho (t) +  \hat S (t) ^{2} \partial _{t}\hat\rho (t) ]
\\
& = \Tr[ 2 \hat S (t) \partial _{t} (\partial _{\xi } \hat\rho (t) ) - \hat S (t) ^{2} \partial _{t}\hat\rho (t)   ]
\\
& = 2 \Tr\left [ \hat S (t) (\partial _{\xi} \mathcal{L}_{\xi }) \mathcal{T} e ^{\int ^{t} _{0 } \mathcal{L} _{\xi,\mathrm{c} } (\tau _{2}) d \tau _{2} }  \hat\rho (0)  
+ \hat S (t) \mathcal{L} _{\xi,\mathrm{c} } (t) \int ^{t} _{0} \mathcal{T} e ^{\int ^{t} _{\tau } \mathcal{L} _{\xi,\mathrm{c} } (\tau _{1}) d \tau _{1} } (\partial _{\xi} \mathcal{L}_{\xi,\mathrm{c} }) \mathcal{T} e ^{\int ^{\tau} _{0 } \mathcal{L} _{\xi,\mathrm{c} } (\tau _{2}) d \tau _{2} }  \hat\rho (0) \right ]
\nonumber\\
& \quad - \Tr[ \hat S (t) ^{2} \mathcal{L} _{\xi,\mathrm{c} } (t) [\hat\rho (t) ]   ]
.
\end{align}
Thus, we have
\begin{align}
\partial _{t} \mathcal{F} _{Q} & =  2 \Tr\left [ \hat S (t) (\partial _{\xi} \mathcal{L}_{\xi })  \hat\rho (t)  \right ]
+ \Tr\left [  
\hat S (t) \mathcal{L} _{\xi,\mathrm{c} } (t) [\{  \hat S (t),  \hat\rho (t)\} ] \right ]
- \Tr[ \hat S (t) ^{2} \mathcal{L} _{\xi,\mathrm{c} } (t) [\hat\rho]   ]
\\
& =  2 \Tr\left [ \hat S (t) (\partial _{\xi} \mathcal{L}_{\xi })  \hat\rho (t)  \right ] 
- i  \Tr\left (  
\hat S (t) [\hat H _{\text{c}} (t) ,  \{  \hat S (t),  \hat\rho (t)\}  ] 
\right )
+ i \Tr(\hat S (t) ^{2}  [\hat H _{\text{c}} (t) ,\hat\rho]  )
\nonumber \\
& \quad +  \frac{\gamma}{2} \sum _{j,\ell =1} ^{n} \frac{1}{2}  A _{j\ell} (\xi) \Tr\left [  
([\hat L_j ^\dag, \hat S (t) ] \hat L_\ell  + \hat L_j ^\dag[\hat S (t) , \hat L_\ell  ] ) [\{  \hat S (t),  \hat\rho (t)\} ] \right ]
\nonumber \\
& \quad - \frac{\gamma}{2} \sum _{j,\ell =1} ^{n} \frac{1}{2}  A _{j\ell} (\xi) \Tr\left [  
([\hat L_j ^\dag, \hat S ^{2} (t) ] \hat L_\ell  + \hat L_j ^\dag[\hat S ^{2} (t) , \hat L_\ell  ] ) \hat\rho (t) \right ]
\\
\label{seq:qfi.dt.prod}
& =  2 \Tr\left [ \hat S (t) (\partial _{\xi} \mathcal{L}_{\xi })  \hat\rho (t)  \right ]
+ \frac{\gamma}{2} \sum _{j,\ell =1} ^{n}  A _{j\ell} (\xi) \Tr\left (  
[\hat L_j ^\dag, \hat S (t) ] [\hat L_\ell, \hat S (t) ]  \hat\rho (t) \right )
.
\end{align}
It is interesting to note that the contributions to the QFI growth rate that explicitly depend on the control Hamiltonian $\hat H _{\text{c}} (t)$ are exactly canceled in the above derivation. Intuitively, this can be understood from the fact that the control Hamiltonian $\hat H _{\text{c}} (t)$ does not have any explicit dependence on $\xi$. However, the final QFI $\mathcal{F} _{Q} $ for any given sensing protocol still depends on $\hat H _{\text{c}} (t)$ via $\hat\rho (t)$. As we will show below, Eq.~\eqref{seq:qfi.dt.prod} enables us to prove useful upper bounds on the growth rate of the QFI, which also does not have any explicit dependence on $\hat H _{\text{c}} (t)$.

We now bound the QFI via its value in the fast-reset regime. Our strategy is to relate the first term in Eq.~\eqref{seq:qfi.dt.prod} to the second term in the same equation and the fast-reset-limit QFI, which allows us to upper bound the time derivative of QFI over all control and measurement strategies. We begin by rewriting the first term in  Eq.~\eqref{seq:qfi.dt.prod} as
\begin{align}
& \Tr\left [ \hat S (t) (\partial _{\xi} \mathcal{L}_{\xi })  \hat\rho (t)  \right ] = \frac{\gamma}{2} \sum _{j,\ell =1} ^{n} \frac{1}{2}  
\partial _{\xi} A _{j\ell} (\xi) \Tr\left [  
([\hat L_j ^\dag, \hat S (t) ] \hat L_\ell  + \hat L_j ^\dag[\hat S  (t) , \hat L_\ell  ] ) \hat\rho (t) \right ]
\\
= & \frac{\gamma}{2} \sum _{j,\ell =1} ^{n} \frac{1}{2}  
\left ( [\partial _{\xi} \boldsymbol{g} ^{\dag} (\xi)] \boldsymbol{g} (\xi) + \boldsymbol{g}  ^{\dag}  (\xi) \partial _{\xi} \boldsymbol{g} (\xi) \right ) _{j\ell} \Tr\left [  
([\hat L_j ^\dag, \hat S (t) ] \hat L_\ell  + \hat L_j ^\dag[\hat S  (t) , \hat L_\ell  ] ) \hat\rho (t) \right ]
\\
= & \frac{\gamma}{2} \text{Re} \sum _{j,\ell =1} ^{n} 
\left [ \boldsymbol{g}  ^{\dag}  (\xi) \partial _{\xi} \boldsymbol{g} (\xi) \right ] _{j\ell} \Tr\left [  
([\hat L_j ^\dag, \hat S (t) ] \hat L_\ell  + \hat L_j ^\dag[\hat S  (t) , \hat L_\ell  ] ) \hat\rho (t) \right ]
. 
\end{align}
We rewrite the above expression in a form that will let us obtain a tight upper bound for the QFI. Specifically, we define $\Delta \hat L _{\ell} (t) \equiv \hat L _{j} - \Tr[ \hat L_\ell  \hat\rho (t) ] $; noting that the expectation value of any commutator between two Hermitian operators is purely imaginary, we have 
\begin{align}
& \Tr\left [ \hat S (t) (\partial _{\xi} \mathcal{L}_{\xi })  \hat\rho (t)  \right ] 
\nonumber \\
= & \frac{\gamma}{2} \text{Re} \sum _{j,\ell =1} ^{n} 
\left [ \boldsymbol{g}  ^{\dag}  (\xi) \partial _{\xi} \boldsymbol{g} (\xi) \right ] _{j\ell} \Tr\left (  
[\hat L_j ^\dag, \hat S (t) ] \Delta  \hat L_\ell  \hat\rho (t) \right )
+ \left ( [\partial _{\xi} \boldsymbol{g} ^{\dag} (\xi)] \boldsymbol{g} (\xi)  \right ) ^{*} _{\ell j } \Tr\left (  
\Delta \hat L_j ^\dag[\hat S  (t) , \hat L_\ell  ]   \hat\rho (t) \right )
.
\end{align}
By the Cauchy-Schwarz inequality, both terms on the right-hand side of the above equation can be bounded by an identical upper bound as 
\begin{align}
& \frac{\gamma}{2} \left | \sum _{j,\ell =1} ^{n} 
\left [ \boldsymbol{g}  ^{\dag}  (\xi) \partial _{\xi} \boldsymbol{g} (\xi) \right ] _{j\ell} \Tr\left (  
[\hat L_j ^\dag, \hat S (t) ] \Delta \hat L_\ell  \hat\rho (t) \right ) \right |
\\
\le & \frac{\gamma}{2} \sqrt{\sum _{j,\ell =1} ^{n}  A _{j\ell} (\xi) \Tr\left (  
[ \hat S (t) , \hat L_j ^\dag ] [\hat L_\ell, \hat S (t) ]  \hat\rho (t) \right )}
\sqrt{\sum _{ab} \left [ (\partial _{\xi} \boldsymbol{g} ^{\dag})  (\partial _{\xi} \boldsymbol{g} ) \right ]_{ab}  \Tr\left ( \Delta \hat L_a ^\dag \Delta \hat L_b  \hat\rho (t)  \right ) }
,
\end{align}
so that we can bound the following sum as 
\begin{align}
\frac{\gamma}{2} \sum _{j,\ell =1} ^{n}  A _{j\ell} (\xi) \Tr\left (  
[ \hat S (t) , \hat L_j ^\dag ] [\hat L_\ell, \hat S (t) ]  \hat\rho (t) \right )
\ge \frac{ \Tr\left [ \hat S (t) (\partial _{\xi} \mathcal{L}_{\xi })  \hat\rho (t)  \right ] ^{2}}{2 \gamma \sum _{ab} \left [ (\partial _{\xi} \boldsymbol{g} ^{\dag})  (\partial _{\xi} \boldsymbol{g} ) \right ]_{ab}  \Tr\left ( \Delta \hat L_a ^\dag \Delta \hat L_b  \hat\rho (t)  \right ) }
.
\end{align}
Substituting this back into the following equation for the time derivative of QFI:
\begin{align}
\partial _{t} \mathcal{F} _{Q} & =
2 \Tr\left [ \hat S (t) (\partial _{\xi} \mathcal{L}_{\xi })  \hat\rho (t)  \right ]
+ \frac{\gamma}{2} \sum _{j,\ell =1} ^{n}  A _{j\ell} (\xi) \Tr\left (  
[\hat L_j ^\dag, \hat S (t) ] [\hat L_\ell, \hat S (t) ]  \hat\rho (t) \right )
,
\end{align}
we obtain 
\begin{align}
& \partial _{t} \mathcal{F} _{Q} \le 
2 \Tr\left [ \hat S (t) (\partial _{\xi} \mathcal{L}_{\xi })  \hat\rho (t)  \right ]
- \frac{ \Tr\left [ \hat S (t) (\partial _{\xi} \mathcal{L}_{\xi })  \hat\rho (t)  \right ] ^{2}}{2 \gamma \max _{ \tau } \sum _{ab} \left [ (\partial _{\xi} \boldsymbol{g} ^{\dag})  (\partial _{\xi} \boldsymbol{g} ) \right ]_{ab}  \Tr\left ( \Delta \hat L_a ^\dag \Delta \hat L_b  \hat\rho (\tau)  \right ) }
.
\end{align}
Thus, we can bound the QFI growth rate as
\begin{align}
\partial _{t} \mathcal{F} _{Q} \le 
& 2 \gamma \max _{ \tau } \sum _{ab} \left [ (\partial _{\xi} \boldsymbol{g} ^{\dag})  (\partial _{\xi} \boldsymbol{g} ) \right ]_{ab}  \Tr\left ( \Delta \hat L_a ^\dag \Delta \hat L_b  \hat\rho (\tau)  \right ) 
\\
& - \frac{\left (  \Tr\left [ \hat S (t) (\partial _{\xi} \mathcal{L}_{\xi })  \hat\rho (t)  \right ] - 2 \gamma \max _{ \tau } \sum _{ab} \left [ (\partial _{\xi} \boldsymbol{g} ^{\dag})  (\partial _{\xi} \boldsymbol{g} ) \right ]_{ab}  \Tr\left ( \Delta \hat L_a ^\dag \Delta \hat L_b  \hat\rho (\tau)  \right ) \right ) ^{2}}
{2 \gamma \max _{ \tau } \sum _{ab} \left [ (\partial _{\xi} \boldsymbol{g} ^{\dag})  (\partial _{\xi} \boldsymbol{g} ) \right ]_{ab}  \Tr\left ( \Delta \hat L_a ^\dag \Delta \hat L_b  \hat\rho (\tau)  \right ) }
,
\end{align}
which leads to the following upper bound on the QFI growth rate: 
\begin{align}
\label{seq:qfi.rate.C}
\partial _{t} \mathcal{F} _{Q} &\le  2 \gamma \max _{\tau} \sum _{ab} \left [ (\partial _{\xi} \boldsymbol{g} ^{\dag})  (\partial _{\xi} \boldsymbol{g} ) \right ]_{ab}  \Tr\left ( \Delta \hat L_a ^\dag \Delta \hat L_b  \hat\rho (\tau)  \right ) \equiv \mathcal{C},
\end{align}
where $\mathcal{C}$ is a (time-independent) constant.
When we allow entanglement with ancillae and/or quantum control, we should perform an optimization with respect to the corresponding set of states in the above equation. We also note that the right-hand side may not be mathematically well-defined if the sensor involves bosonic degrees of freedom with an infinite-dimensional Hilbert space, unless we constrain the boson number or the dynamics conserve the total boson number. Under this boundedness constraint, which is trivial for finite-dimensional systems, $\mathcal{C}$ is a constant (independent of $t$).

We now use Eq.~\eqref{seq:qfi.rate.C} to prove the following upper bound on the time-averaged QFI [i.e., Eq.~\eqref{seq:qfi.avg.bound}]: 
\begin{align}
\label{seq:qfi.avg.max}
\forall t: \, \frac{\mathcal{F} _{Q} }{t }  \le 
& 2 \gamma \max _{ \tau } \sum _{ab} \left ( [\partial _{\xi}  \boldsymbol{g}  (\xi)]  [\partial _{\xi}  \boldsymbol{g} ^{\dag} (\xi)] \right ) _{ab}  \Tr\left ( \Delta \hat L_a ^\dag \Delta \hat L_b  \hat\rho (\tau)  \right ) 
,
\end{align}
where $\hat\rho (t) $ denotes the sensor state during the evolution. Given $\mathcal{F}_Q(t=0)=0$ and that, for all $\tau\in[0,t]$, the QFI growth rate satisfies
\begin{equation}
\partial_\tau \mathcal{F}_Q(\tau)\le \mathcal{C} ,
\end{equation}
where $\mathcal{C}$ is a $\tau$-independent constant,
integrating from $0$ to $t$ gives
\begin{equation}
\mathcal{F}_Q(t)-\mathcal{F}_Q(0)=\int_0^t d\tau\,\partial_\tau \mathcal{F}_Q(\tau)
\le \int_0^t d\tau\, \mathcal{C} = \mathcal{C}t.
\end{equation}
Hence
\begin{equation}
\frac{\mathcal{F}_Q(t)}{t}\le \mathcal{C},
\end{equation}
which proves~\eqref{seq:qfi.avg.max}.

The above inequality holds for arbitrary Hermitian jump operators $\hat L_a$ and an arbitrary fixed sensing protocol. We will now show that the bound becomes tight when we further optimize the time-averaged QFI over all entangled or unentangled strategies, which will replace the maximization over time with a maximization over all possible entangled or unentangled density matrices in Eq.~\eqref{seq:qfi.avg.max}: 
\begin{align}
\label{seq:qfi.avg.max.ent}
\text{entangled sensors: } & \frac{\left (\mathcal{F} _{Q} \right ) _{\text{ent}} }{t }  \le 
2 \gamma \max _{ \hat \rho } \sum _{ab} \left ( [\partial _{\xi}  \boldsymbol{g}  (\xi)]  [\partial _{\xi}  \boldsymbol{g} ^{\dag} (\xi)] \right ) _{ab}  \Tr\left ( \Delta \hat L_a ^\dag \Delta \hat L_b  \hat \rho  \right ) 
,\\
\label{seq:qfi.avg.max.sep}
\text{unentangled sensors: } & \frac{\left (\mathcal{F} _{Q} \right ) _{\text{sep}} }{t }  \le 
2 \gamma \max _{ \hat \rho _{\text{sep}}  } \sum _{ab} \left ( [\partial _{\xi}  \boldsymbol{g}  (\xi)]  [\partial _{\xi}  \boldsymbol{g} ^{\dag} (\xi)] \right ) _{ab}  \Tr\left ( \Delta \hat L_a ^\dag \Delta \hat L_b  \hat \rho _{\text{sep}}  \right ) 
.
\end{align}
To show that the bounds on the RHS of Eqs.\ \eqref{seq:qfi.avg.max.ent} and~\eqref{seq:qfi.avg.max.sep} are tight, we first define QFI growth rate at infinitesimal times as a new function $f_Q (\ketbra{\psi _{i}} )$
\begin{align}
\label{seq:qfi.avg.max.0}
& 
f_Q (\ketbra{\psi _{i}} ) \equiv \lim _{t \to 0^{+}} \partial _{t} \mathcal{F} _{Q} (e  ^{\mathcal{L} _{\xi }  t} (\ketbra{\psi _{i}}))  
.
\end{align}
Using Eqs.~\eqref{seq:symmLD.def} and~\eqref{seq:FQ.SymmLD2}, we can explicitly compute the growth rate for the QFI in the $t \to 0^{+}$ limit as
\begin{align}
\label{seq:fQ.sing}
f_Q (\ketbra{\psi _{i}} )
= 
& 2 \gamma \sum _{ab} \left ( [\partial _{\xi}  \boldsymbol{g}  (\xi)]  [\partial _{\xi}  \boldsymbol{g} ^{\dag} (\xi)] \right ) _{ab} \braket{\psi _{i} | \Delta \hat L_a ^\dag \Delta \hat L_b  |\psi _{i} } 
.
\end{align}
We have thus shown that, for the pure initial states that maximize $f_Q (\ketbra{\psi _{i}} )$ (over all entangled or unentangled states), $\mathcal{C}= \lim_{t\to 0^+}\partial_t \mathcal{F}_Q(t)$.
As such, the upper bounds in Eqs.~\eqref{seq:qfi.avg.max.ent} and~\eqref{seq:qfi.avg.max.sep} can be achieved with a fast-reset protocol using an appropriately chosen entangled or unentangled pure initial state (that maximizes $f_Q (\ketbra{\psi _{i}} )$): 
\begin{align}
\label{seq:qfi.avg.max.ent.2}
\text{entangled sensors: } & \frac{\left (\mathcal{F} _{Q} \right ) _{\text{ent}} }{t }  \le 
 \max _{ \hat \rho } f_Q ( \hat \rho )
,\\
\label{seq:qfi.avg.max.sep.2}
\text{unentangled sensors: } & \frac{\left (\mathcal{F} _{Q} \right ) _{\text{sep}} }{t }  \le 
\max _{ \hat \rho _{\text{sep}}  } f_Q ( \hat \rho _{\text{sep}} ) 
.
\end{align}
We have thus proven that the optimal time-averaged QFI $\mathcal{F} _{Q}/t$ in estimating purely dissipative dynamics with Hermitian jump operators is always achieved by a fast-reset protocol. Note that this is drastically different from the Hamiltonian estimation case, where the optimal strategy corresponds to the longest possible evolution time, which would achieve the Heisenberg limit in total evolution time.

\subsection{Extension of the upper bounds on the time-averaged QFI for noise estimation to function estimation}

\label{suppsec:multiparams.qfi}

In this section, we show that the upper bound in Eq.~\eqref{seq:qfi.avg.bound} [i.e., Eq.~\eqref{seq:qfi.avg.max}] can be straightforwardly generalized to the case of function estimation. 

Suppose that the dephasing dynamics $\mathcal{L} _{\xi } $ now depends on a set of unknown parameters $\boldsymbol{\xi} = (\xi _{1}, \xi _{2}, \ldots , \xi _{n_{\xi}})$:
\begin{align}
\mathcal{L} _{\boldsymbol{\xi} }  \hat\rho =
\frac{\gamma}{2} \sum _{j,\ell =1} ^{n} A _{j\ell} (\boldsymbol{\xi} ) ( \hat L_\ell \hat\rho \hat L_j ^\dag-\frac{1}{2} \{ \hat L_j ^\dag \hat L _\ell , \hat \rho\}) 
,
\end{align}
and we want to estimate a linear function~\cite{Eldredge2018,Ehrenberg2023} of those unknown parameters, $\boldsymbol{a} ^{T} \cdot \boldsymbol{\xi} = \sum _{\ell} a _{\ell} \xi _{\ell}$, in the presence of $n_{\xi}-1$ nuisance parameters~\cite{Hayashi2005}. Here, $a _{\ell}$ are constant weighting coefficients; this problem includes the cases where we want to estimate one of the parameters $\xi _{\ell}$. A similar derivation as the one presented in Sec.~\ref{suppsec:opt.fast.reset} lets us prove the following upper bound on the time-averaged QFI for estimating $\boldsymbol{a} ^{T} \cdot \boldsymbol{\xi}$. Denote by $\{\boldsymbol{\alpha} _{m} \} $ a set of linearly independent $n_{\xi}$ vectors satisfying $\boldsymbol{\alpha} _{1} ^T \cdot \boldsymbol{a} =1 $ and $\boldsymbol{\alpha} _{m} ^T \cdot \boldsymbol{a} =0 $ for all $m >1$. We then have
\begin{align}
\forall t: \, \frac{\mathcal{F} _{Q} }{t }  \le 
& 2 \gamma \max _{ \tau } \sum _{\ell j}
\alpha _{\ell } \alpha _{j}
\sum _{ab} \left ( [\partial _{\xi _{\ell}}  \boldsymbol{g}  (\xi)]  [\partial _{\xi_{j}}  \boldsymbol{g} ^{\dag} (\xi)] \right ) _{ab}  \Tr\left ( \Delta \hat L_a ^\dag \Delta \hat L_b  \hat\rho (\tau)  \right ) 
,
\end{align}
which generalizes Eq.~\eqref{seq:qfi.avg.max}. As the function being optimized is convex, it will take its maximum for certain pure states. Define a matrix function $\boldsymbol{f} _{\boldsymbol{Q}} (\ketbra{\psi _{i}} ) $:
\begin{align}
\left [ \boldsymbol{f} _{\boldsymbol{Q}} (\ketbra{\psi _{i}} ) \right ] _{\ell j}
\equiv\, 
& 2 \gamma \sum _{ab} \left ( [\partial _{\xi _{\ell }}  \boldsymbol{g}  (\xi)]  [\partial _{\xi _{j }}  \boldsymbol{g} ^{\dag} (\xi)] \right ) _{ab} \braket{\psi _{i} | \Delta \hat L_a ^\dag \Delta \hat L_b  |\psi _{i} }
.
\end{align}
We have thus proven the multiparameter generalization of Eqs.~\eqref{seq:qfi.avg.max.ent} and~\eqref{seq:qfi.avg.max.sep}:
\begin{align} 
\text{entangled sensors: } & \frac{\left (\mathcal{F} _{Q} \right ) _{\text{ent}} }{t }  \le 
\max _{ \ket{\psi _{i}} } \left [ \boldsymbol{\alpha} ^{T} \cdot \boldsymbol{f} _{\boldsymbol{Q}} (\ketbra{\psi _{i}} ) \cdot \boldsymbol{\alpha} \right ]
,\\ 
\text{unentangled sensors: } & \frac{\left (\mathcal{F} _{Q} \right ) _{\text{sep}} }{t }  \le 
\max _{ \ket{\psi _{i}} \in \{ \otimes _{j} \ket{\phi_{j}} \} } \left [ \boldsymbol{\alpha} ^{T} \cdot \boldsymbol{f} _{\boldsymbol{Q}} (\ketbra{\psi _{i}} ) \cdot \boldsymbol{\alpha} \right ]
.
\end{align}
Intuitively, $\{\boldsymbol{\alpha} _{m} \} $ represents a linear transformation from $\boldsymbol{a} ^{T} \cdot \boldsymbol{\xi}$ to $\boldsymbol{\xi}$. However, as we have only specified a single linear function of $\boldsymbol{\xi}$, the choice of $\{\boldsymbol{\alpha} _{m} \} $ is not unique, and different choices can produce distinct upper bounds on the QFI. Minimizing over all possible choices of $\{\boldsymbol{\alpha} _{m} \} $ subject to the constraints $\boldsymbol{\alpha} _{1} ^T \cdot \boldsymbol{a} =1 $ and $\boldsymbol{\alpha} _{m} ^T \cdot \boldsymbol{a} =0 $, we can derive a tighter bound~\cite{Eldredge2018,Ehrenberg2023}.
In the case of multiparameter noise estimation, it is generally unknown whether the above procedure can produce a tight bound. Identifying the fundamental quantum limits and optimal protocols for multiparameter estimation is an interesting open question for future investigation.

\subsection{Entanglement advantage in estimating Markovian noise with a power-law spatial noise correlation}

\label{subsec:ent.adv.white.pl}

In this section, we provide details on how to calculate the fundamental limits for sensing power-law spatially correlated Markovian noise with entangled and unentangled sensors, which proves Theorem~\ref{thm:ent.R.white}.

We begin by applying the results in Sec.~\ref{suppsec:opt.fast.reset} (Theorems~\ref{thm:opt.ent} and~\ref{thm:opt.sep} in the main text) to the following pure-dephasing Lindbladian whose coefficient matrix elements decay as a power law with respect to the sensor distances (see Eq.~\eqref{eq:Amat.x.alpha} in the main text):
\begin{align}
\label{neq:lindblad.pl}
& \mathcal{L} _{\xi }  \hat\rho =
\frac{\gamma}{2} \sum _{j,\ell =1} ^{n} A _{j\ell} (\xi) ( \hat L_\ell \hat\rho \hat L_j ^\dag-\frac{1}{2} \{ \hat L_j ^\dag \hat L _\ell , \hat \rho\}) 
, \\
\label{neq:lindblad.A.pl}
& A _{jj'}(\xi) = \xi |j - j' | ^{-\alpha} 
, \quad 
j \ne j'
,
\end{align}
where $\hat L_j ^\dag = \hat L_j $ are Hermitian operators.
Specifically, we compute the maximal time-averaged QFI for entangled and unentangled sensors in estimating $\mathcal{L} _{\xi } $, respectively, via Eqs.~\eqref{seq:qfi.avg.max.ent.2} and~\eqref{seq:qfi.avg.max.sep.2}, [c.f.~Eq.~\eqref{eq:qfi.avg.max} in the main text]:
\begin{align}
\left ( \frac{\mathcal{F} _{Q} }{t } \right ) _{\max}  = 
\max _{ \hat \rho } f_Q ( \hat \rho )
=
& 2 \gamma \max _{ \hat\rho } \sum _{ab} \left ( [\partial _{\xi}  \boldsymbol{g}  (\xi)]  [\partial _{\xi}  \boldsymbol{g} ^{\dag} (\xi)] \right ) _{ab}  \Tr\left ( \Delta \hat L_a ^\dag \Delta \hat L_b  \hat\rho   \right ) 
,
\end{align}
where $\boldsymbol{g} (\xi) $ is defined as a square-root matrix satisfying $\boldsymbol{A} (\xi)  = \boldsymbol{g} ^{\dag} (\xi) \boldsymbol{g} (\xi) $. Since our goal is to estimate an overall dephasing strength in Eq.~\eqref{neq:lindblad.pl}, so that $\boldsymbol{A} (\xi) = \xi \boldsymbol{A} (\xi=1)  $, we can prove that 
\begin{align}
[\partial _{\xi}  \boldsymbol{g}  (\xi)]  [\partial _{\xi}  \boldsymbol{g} ^{\dag} (\xi)]
= \frac{1}{4 \xi}\boldsymbol{A} (\xi) 
. 
\end{align}
Thus, we can rewrite the optimal time-averaged QFI as
\begin{align}
\label{seq:qfi.pertime.deph.rate}
\left ( \frac{\mathcal{F} _{Q} }{t } \right ) _{\max}  = 
& \frac{\gamma }{2 \xi}\max _{\hat \rho } \sum _{ab} A _{ab} (\xi)   \Tr\left ( \Delta \hat L_a ^\dag \Delta \hat L_b  \hat\rho   \right ) 
.
\end{align}
In what follows, we use the above formula to compute the time-averaged QFI for entangled and unentangled sensors, respectively.

Let us first consider unentangled strategies, which means that $\hat \rho $ in Eq.~\eqref{seq:qfi.pertime.deph.rate} stays unentangled (i.e., separable). Note that our derivation also applies to the case with any local sensor-ancilla entanglement, although (as we show below) local ancillae do not improve the final bound. In this case, we can always rewrite $\hat \rho $ as a convex sum over product states of $N$ individual sensor-ancilla subsystems:
\begin{align}
\hat \rho  = \sum _{j} p_{j} \bigotimes _{\ell=1} ^{N} \ketbra{\phi_{j,\ell} }
,
\end{align}
so that we can rewrite the time-averaged QFI as 
\begin{align}
& \left ( \frac{\mathcal{F} _{Q} }{t } \right ) _{\max}  = 
\frac{\gamma }{2 \xi}\max _{\hat \rho } \sum _{ab} A _{ab} (\xi)   \Tr\left ( \Delta \hat L_a ^\dag \Delta \hat L_b  \hat\rho   \right ) 
\\
\label{seq:qfi.pertime.prod}
= & \frac{\gamma }{2 \xi}\max _{\{ p_{j} , \otimes _{\ell=1} ^{N} \ket{\phi_{j,\ell} } \} } \sum _{j} p_{j}  
\sum _{ab} A _{ab} (\xi)  \Tr\left ( \Delta \hat L_a ^\dag \Delta \hat L_b  \ketbra{\phi_{j,a}  } \otimes \ketbra{\phi_{j,b}  }  \right ) 
.
\end{align}
As $\boldsymbol{A} (\xi)$ is a positive semidefinite coefficient matrix, the sum $ \sum _{ab} A _{ab} (\xi)   \Tr\left ( \Delta \hat L_a ^\dag \Delta \hat L_b  \hat\rho  \right ) $ in Eq.~\eqref{seq:qfi.pertime.prod} is also convex with respect to $\hat\rho  $, and we can bound the time-averaged QFI as
\begin{align}
\left ( \frac{\mathcal{F} _{Q} }{t } \right ) _{\max}  = 
& \frac{\gamma }{2 \xi}\max _{\{ p_{j} , \otimes _{\ell=1} ^{N} \ket{\phi_{j,\ell} } \} } \sum _{j} p_{j} (t) 
\sum _{ab} A _{ab} (\xi)  \Tr\left ( \Delta \hat L_a ^\dag \Delta \hat L_b  \ketbra{\phi_{j,a}  } \otimes \ketbra{\phi_{j,b}  }  \right ) 
\nonumber \\
\le  & \frac{\gamma }{2 \xi} \max _{ \{ \ket{\phi_{\ell}} \} } \sum _{a} A _{aa} (\xi) 
\braket{\phi_{a} | \Delta \hat L_a ^\dag \Delta \hat L_a   | \phi_{a}  }  
.
\end{align}
As shown in Eqs.~\eqref{seq:qfi.avg.max.0} and~\eqref{seq:fQ.sing}, the upper bound in the last line of the above equation can also be achieved by a fast-reset sensing protocol using an appropriately chosen product initial sensor state. As such, we have shown that the optimal time-averaged QFI for all separable states satisfies
\begin{align}
\label{seq:opt.qfi.sep.pl}
\left ( \frac{\mathcal{F} ^{\text{(sep)}} _{Q} }{t } \right ) _{\max}  = \gamma  \Theta (N)
= \max \limits_{\hat \rho _{i} \in \{ \hat \rho _{\mathrm{sep}} \} } f_Q (\hat \rho _{i}) 
,
\end{align}
where $f_Q (\cdot)$ is defined in Eq.~\eqref{seq:qfi.avg.max.0} and denotes the time derivative of the QFI function in the limit of infinitesimally small evolution times.

We now consider the optimal time-averaged QFI over all possible sensing strategies, including entangled protocols. We again use Eq.~\eqref{seq:qfi.pertime.deph.rate}:
\begin{align}
\left ( \frac{\mathcal{F} _{Q} }{t } \right ) _{\max}  = 
& \frac{\gamma }{2 \xi}\max _{\hat \rho } \sum _{ab} A _{ab} (\xi)   \Tr\left ( \Delta \hat L_a ^\dag \Delta \hat L_b  \hat\rho  \right ) 
.
\end{align}
By convexity of the function $\sum _{ab} A _{ab} (\xi)   \Tr\left ( \Delta \hat L_a ^\dag \Delta \hat L_b  \hat\rho   \right ) $ on the RHS in $\hat\rho $, we again have
\begin{align}
\left ( \frac{\mathcal{F} _{Q} }{t } \right ) _{\max}  = 
& \frac{\gamma }{2 \xi}\max _{\ket{\psi }} \sum _{ab} A _{ab} (\xi)   \braket{\psi | \Delta \hat L_a ^\dag \Delta \hat L_b  | \psi }
.
\end{align}
This upper bound is also achievable by using a fast-reset sensing protocol with an appropriately chosen initial sensor state $\ket{\psi _{\text{opt}}}$ that maximizes the RHS of the above equation; see below for a detailed discussion on the specific initial sensor state that achieves the optimal time-averaged QFI.
From the Cauchy-Schwarz inequality, we have 
\begin{align}
\braket{\psi | \Delta \hat L_a ^\dag \Delta \hat L_b   | \psi } \le \sqrt{ \braket{\psi | \Delta \hat L_a ^\dag \Delta \hat L_a   | \psi }
\braket{\psi | \Delta \hat L_b ^\dag \Delta \hat L_b   | \psi }}
\le \max _{\ket{\psi }} \max _{a}  \braket{\psi | \Delta \hat L_a ^\dag \Delta \hat L_a   | \psi }
,
\end{align}
so that we obtain the following upper bound on the time-averaged QFI:
\begin{align}
\left ( \frac{\mathcal{F} _{Q} }{t } \right ) _{\max} \le 
& \frac{\gamma }{2 \xi} \left ( \sum _{ab} |A _{ab} (\xi) | \right ) \max _{\ket{\psi }; \ell} \braket{\psi | \Delta \hat L_{\ell} ^\dag \Delta \hat L_{\ell}   | \psi }
.
\end{align}
From Eq.~\eqref{neq:lindblad.A.pl}, we prove the following upper bound on the time-averaged QFI:
\begin{align}
\label{seq:opt.qfi.ent.pl}
\left ( \frac{\mathcal{F} ^{\text{(ent)}} _{Q} }{t } \right ) _{\max} 
= \begin{cases} 
\gamma \Theta(N^{2 - \alpha} )  & \text{for } \alpha < 1, \\ 
\gamma \Theta ( N \log N ) & \text{for } \alpha = 1, \\ 
\gamma \Theta ( N ) & \text{for } \alpha > 1.
\end{cases}
\end{align}
It is straightforward to verify that the above bounds can be achieved by a fast-reset protocol using a GHZ-type entangled initial sensor state consisting of the simultaneous eigenstates of the Hermitian generators $\hat L_a $ with maximal and minimal eigenvalues $\ell _{a, \max/\min}$, respectively:
\begin{align}
\ket{\psi _{\text{opt}}} = \frac{1}{\sqrt{2}} \left [ (\otimes _{a} \ket {\ell _{a, \max}})  + (\otimes _{a} \ket {\ell _{a, \min}} ) 
\right ].
\end{align}

Comparing the optimal time-averaged QFI between the entangled sensors in Eq.~\eqref{seq:opt.qfi.ent.pl} and that of the separable sensors in Eq.~\eqref{seq:opt.qfi.sep.pl}, we prove the following relation for the entanglement advantage $\mathcal{R} $:
\begin{align}
\mathcal{R} = 
\begin{cases} 
    \Theta(N^{1 - \alpha} )  & \text{for } \alpha < 1, \\ 
    \Theta ( \log N ) & \text{for } \alpha = 1, \\ 
    O (1) & \text{for } \alpha > 1,
\end{cases}
\end{align}
which proves Theorem~\ref{thm:ent.R.white}.

\subsection{Entanglement advantage in estimating non-Markovian noise with power-law spatial and temporal noise correlations}

In this section, we first compute the optimal sensitivity limits of single-qubit quantum sensors, subject to a sequence of discrete control $\pi$-pulses, for estimating the strength of non-Markovian noise with a power-law spectrum. We then use this result to compute the optimal sensitivity of multiqubit entangled and unentangled sensors, respectively, for estimating non-Markovian noise with power-law spatiotemporal correlations, assuming a generic control pulse sequence. These results allow us to prove entanglement advantage in estimating such spatially correlated non-Markovian noise, which proves Theorem~\ref{thm:ent.R.1overf}.

\subsubsection{Optimal sensitivity of a single-qubit sensor for estimating non-Markovian noise with a power-law spectrum}

\label{subsec:qfi.1qb.color}

We consider a single-qubit sensor subject to non-Markovian noise represented by a classical stationary stochastic process $B (t)$, whose autocorrelation function and spectrum can be defined as
\begin{align}
C ( t-t' ) & \equiv \overline{\delta B ( t) \delta B ( t') }
, \quad 
\delta B (t) \equiv B (t) - \overline{B (t)}
, \\
S _{BB}[\omega] & \equiv \int ^{+\infty } _{-\infty} C (\tau ) e ^{i \omega \tau } d \tau
.
\end{align}
We focus on the case where the spectral density of the target noise field follows a power-law frequency dependence:
\begin{align}
\label{seq:nsd.1overf}
S _{BB}[\omega] = \xi |\omega | ^{-p} c_{1} (|\omega | / \omega_{\text{cut}} ) ,
\end{align}
where we assume that the exponent $p$ is known, and $c_{1} (|\omega |)$ denotes a regularization function in frequency domain such that $\lim _{x\to 0 } x^{-p} c _{1} (x)= \text{const.}$ and $c _{1} (x) =1$ for $|x|> 1$. Our goal is to estimate $ \xi$. Note that, in the absence of the regularization function $c_{1} (\cdot )$, the noise spectral density function defined in Eq.~\eqref{seq:nsd.1overf} would diverge in the $|\omega| \to 0$ limit. As such, physically, we would expect an infrared cutoff $\omega_{\text{cut}}$ to apply at sufficiently low frequencies. In practice, however, sometimes the power-law behavior in Eq.~\eqref{seq:nsd.1overf} may hold in a frequency range spanning multiple orders of magnitude, and the anticipated IR frequency cutoff may not emerge in the regime of frequencies accessible in experiments~\cite{Oliver2011,Yacoby2013}. In this section, we are interested in the parameter regime where the sensor dynamics is agnostic to the details of such low-frequency cutoff, which would impose certain constraints on the value of $p$ (see Eq.~\eqref{seq:zeta.pl}).

The sensor dynamics is described by a single-qubit Hamiltonian
\begin{align}
\hat H _{\text{sing}} (t) = \frac{1}{2} B ( t) \hat Z + \hat H _{\text{c}} (t)
,
\end{align}
where the control Hamiltonian $\hat H _{\text{c}} (t)$ consists of a sequence of instantaneous $\pi$ pulses (also known as echo pulses) applied at times $\theta_{j} t_{\mathrm{s}}$ for $j=1,2,\ldots , K$, and $t_{\mathrm{s}}$ is the total sensor evolution time per run of the experiment. Note that the control considered in this section is more restrictive compared to Sec.~\ref{suppsec:opt.fast.reset}, where arbitrary quantum control is allowed.

Following standard treatment in qubit noise spectroscopy, we transform to a toggling frame defined with respect to the control Hamiltonian $\hat H _{\text{c}} (t)$, where the control pulses effectively modulate the qubit-signal coupling according to a time-dependent function $F(t)$. Specifically, the modulation function $F(t)$ satisfies $|F(t) | =1$ and is initialized to $F(t=0) = 1$ at the beginning of the time evolution, and switches sign every time a control $\pi$ pulse (either along $x$ or $y$ direction) is applied. The sensor density matrix by the end of time evolution (in each run) is given by
\begin{align}
\hat \rho (t_{\mathrm{s}} ) = \overline{\hat U _{\text{I}} (t) \hat \rho_{i} \hat U _{\text{I}}  ^{\dag} (t)  }
, \quad 
\hat U _{\text{I}}  (t_{\mathrm{s}}) =\mathcal{T} e ^{-i \int ^{t_{\mathrm{s}}}_{0} F (t') \hat B(t')  dt'}
,
\end{align}
so that the qubit coherence function at $t=t_{\mathrm{s}}$ can be computed as 
\begin{align}
\braket{\uparrow \! | \hat \rho (t_{\mathrm{s}}) | \! \downarrow }
= \braket{\uparrow \! | \hat \rho (t=0) | \! \downarrow }
\cdot \exp \left ( -\frac{1}{2} \, \overline{ \left [ \int ^{t_{\mathrm{s}}}_{0} F (t') \hat B(t')  dt'\right ] ^{2}} \right )
.
\end{align}
We can rewrite the above equation in terms of the noise spectrum $S _{BB}[\omega]$ of the classical noise field $B(t)$ and the Fourier transform of the modulation function $F [\omega ] \equiv \int ^{t_{\mathrm{s}}}_{0} F (t') e ^{i\omega t'} dt'$:
\begin{align}
\frac{\braket{\uparrow \! | \hat \rho (t_{\mathrm{s}}) | \! \downarrow }}
{\braket{\uparrow \! | \hat \rho (t=0) | \! \downarrow }}
= \exp \left ( -\frac{1}{4\pi} \int ^{+\infty } _{-\infty} |F [\omega ]| ^{2} S _{BB}[\omega] d \omega \right )
\equiv  e ^{-\zeta (t_{\mathrm{s}}) }
,
\end{align}
where we introduce the dephasing factor function $\zeta (t_{\mathrm{s}})$ for convenience. By definition,
\begin{align}
\zeta (t_{\mathrm{s}}) = 
\frac{1}{4\pi} \int ^{+\infty } _{-\infty} |F [\omega ]| ^{2} S _{BB}[\omega] d \omega
.
\end{align}
In the following discussion, we first compute $F [\omega ]$ for different control sequences, and then use that result to derive the optimal time-averaged QFI of the corresponding sensing pulse sequence.

The Fourier-domain function $F [\omega ]$ is also known as the filter function of the control pulse sequence. In the absence of any control pulse, $F(t) =1$, and the filter function reads
\begin{align}
F _{\text{FID}} [\omega ] = \int ^{t_{\mathrm{s}}}_{0} F (t') e ^{i\omega t'} dt'
= \frac{e ^{i\omega t_{\mathrm{s}}}-1}{i\omega }
.
\end{align}
For a given control pulse sequence specified by the sequence of time points $\theta_{j} t_{\mathrm{s}}$ ($j=1,2,\ldots , K$), we also explicitly compute its filter function as 
\begin{align}
F _{\{\theta_{j}\} } [\omega ] = \int ^{t_{\mathrm{s}}}_{0} F (t') e ^{i\omega t'} dt'
= \frac{\sum _{\ell =1} ^{K+1} (-)^{\ell-1} (e^{i \omega \theta_{j} t_{\mathrm{s}}} - e^{i \omega \theta_{j-1} t_{\mathrm{s}}}) }{i \omega }
= \frac{2 \sum _{\ell =1} ^{K} (-)^{\ell-1} e^{i \omega \theta_{j} t_{\mathrm{s}}} -1 + (-) ^{K} e ^{i \omega t_{\mathrm{s}} } }{i \omega }
,
\end{align}
where we set $\theta_{j=0}=0 $ and $\theta_{j=K+1}=1 $. The dephasing factor function now becomes
\begin{align}
\text{No pulse: }
\zeta _{\text{FID}}  (t_{\mathrm{s}})&  = 
\frac{1}{4\pi} \int ^{+\infty } _{-\infty} |F _{\text{FID}}  [\omega ]| ^{2} S _{BB}[\omega] d \omega = \frac{\xi}{\pi} \int ^{+\infty } _{-\infty} \frac{\sin ^{2} \frac{\omega t_{\mathrm{s}}}{2}}{\omega ^{2+p} } c_{1} \left(\frac{|\omega |}{\omega _{\text{cut}}} \right) d \omega
\\
& = \frac{\xi}{\pi} t_{\mathrm{s}} ^{1+p} \int ^{+\infty } _{-\infty} c_{1} \left(\frac{| y |}{\omega _{\text{cut}}t_{\mathrm{s}}  } \right) \sin ^{2} \frac{y}{2} \frac{dy}{y ^{2+p}}, \\
\text{With pulse(s): }
\zeta _{\{\theta_{j}\} }   (t_{\mathrm{s}})&  = \frac{1}{4\pi} \int ^{+\infty } _{-\infty} |F _{\{\theta_{j}\} }  [\omega ]| ^{2} S _{BB}[\omega] c_{1} \left(|\omega | / \omega_{\text{cut}} \right) d \omega 
\\
& = \frac{\xi}{4\pi} t_{\mathrm{s}} ^{1+p} \int ^{+\infty } _{-\infty} c_{1} \left(\frac{| y |}{\omega _{\text{cut}}t_{\mathrm{s}}  } \right) 
\left |2 \sum _{\ell =1} ^{K} (-)^{\ell-1} e^{i \theta_{j} y } -1 + (-) ^{K} e ^{i y } \right | ^{2} \frac{dy}{y ^{2+p}}
.
\end{align}
For generic values of $p$, the integrals in the equations above would depend on details of the regularization function $c_{1} (\cdot)$. We are interested in the range of $p$ where the sensor qubit dynamics is agnostic to the details of the cutoff: specifically, we can show
\begin{subequations}
\label{seq:zeta.pl}
\begin{align}
\text{No pulse: if } p \in (-1,1),& \quad 
\zeta _{\text{FID}}  (t_{\mathrm{s}})  =  \xi t_{\mathrm{s}} ^{1+p} \mathcal{C} _{\text{FID}}  
, \\
\text{With pulse(s): if } p \in (-1,3),& \quad 
\zeta _{\{\theta_{j}\} }   (t_{\mathrm{s}})  =  \xi t_{\mathrm{s}} ^{1+p} \mathcal{C} _{\{\theta_{j}\} }
,
\end{align}   
\end{subequations}
where $\mathcal{C} _{\text{FID}}  $ and $\mathcal{C} _{\{\theta_{j}\} }$ are constant coefficients set by the control (see, e.g.,~\cite{Clerk2021}): 
\begin{align}
& \mathcal{C} _{\text{FID}}  = \frac{\Gamma(-1-p) }{\pi} \sin \frac{p\pi}{2}
, \\
\label{seq:calC.gen.echo}
& \mathcal{C} _{\{\theta_{j}\} } = \frac{\Gamma(-1-p) }{\pi} \left \{ -4 \sum _{\ell < \ell '} ^{K} (-) ^{\ell + \ell '} (\theta_{\ell} - \theta_{\ell'}) ^{1+p} + 2 \sum _{\ell =1} ^{K} (-) ^{\ell} \left [ (-) ^{K} (1-\theta_{\ell}) ^{1+p} - \theta_{\ell}^{1+p} \right ] + (-)^{K}\right \} \sin \frac{p\pi}{2}
.
\end{align}
Here, $\Gamma (\cdot)$ denotes the gamma function.
Particularly, for a single $\pi$ pulse applied at $\theta_{j=1} = \frac{1}{2}$, also known as a Hahn-echo sequence, we have
\begin{align}
\mathcal{C} _{\text{Hahn} } =  \frac{2 ^{1-p} -1}{\pi} \Gamma(-1-p) \sin \frac{p\pi}{2}
.
\end{align}
Note that this coefficient function at $p=1$ and $p=2$ contains a diverging function $\Gamma(-1-p)$. In those cases, the coefficient function can be defined via a smooth extension by taking the $p\to 1$ and $p\to 2$ limits, respectively.

The dephasing factor function in Eq.~\eqref{seq:zeta.pl} fully specifies the sensor dynamics under the non-Markovian noise with the power-law spectrum in Eq.~\eqref{seq:nsd.1overf} and the control pulse sequence corresponding occuring at times 
$\{\theta_{j} t_{\mathrm{s}} \} $. For any specific control pulse sequence, we can compute the QFI of the final sensor state $\hat \rho (t_{\mathrm{s}} | \{\theta_{j}\} ) $ as
\begin{align}
\mathcal{F } _{Q} (\hat \rho (t_{\mathrm{s}} )| \{\theta_{j}\}  ) = \frac{(t_{\mathrm{s}} ^{1+p} \mathcal{C} _{\{\theta_{j}\} }) ^{2} }{e ^{2 \xi t_{\mathrm{s}} ^{1+p} \mathcal{C} _{\{\theta_{j}\} } } -1}.
\end{align}
Maximizing over all possible total evolution times $t_{\mathrm{s}} $, the optimal time-averaged QFI is given by 
\begin{align}
\label{seq:qfiovert.rescale}
\max _{t_{\mathrm{s}} } \frac{\mathcal{F } _{Q} (\hat \rho (t_{\mathrm{s}} )| \{\theta_{j}\}  )}{t_{\mathrm{s}} }
= \max _{t_{\mathrm{s}} } \frac{t_{\mathrm{s}} ^{1+2p} \mathcal{C} _{\{\theta_{j}\} } ^{2} }{e ^{2 \xi t_{\mathrm{s}} ^{1+p} \mathcal{C} _{\{\theta_{j}\} } } -1}
= \xi  ^{-\frac{1+2p}{1+p}} \mathcal{C} _{\{\theta_{j}\} } ^{\frac{1}{1+p}}
\max _{y} g(y)
, \quad 
g(y) \equiv \frac{y ^{\frac{1+2p}{1+p}}}{e ^{2y} -1}
,
\end{align}
and the maximum is achieved when $y _{0} = \xi t_{\mathrm{s}} ^{1+p} \mathcal{C} _{\{\theta_{j}\} }$ is taken to be the variable that maximizes the function $ g(y) $. Generally for any $p>0$ and $y\in [0,+\infty)$, the function $ g(y) $ is monotonically increasing until reaching its global maximum at $y_{0}$, and then becomes monotonically decreasing for all $y >y_{0}$; Fig.~\ref{suppfig:fyoverp} depicts the function for a few values of $p$.
We see that, when $p>0$, the function $ g(y)  $ reaches its maximum at a nonzero value $y _{0} = \xi t_{\mathrm{s}} ^{1+p} \mathcal{C} _{\{\theta_{j}\} }>0$, which means that the time-averaged QFI for estimating a power-law noise spectrum is maximized at a finite evolution time per run (for any specific control echo pulse sequence). This stands in contrast to the case of estimating Markovian noise, where the time-averaged QFI is always achieved by a fast-reset protocol.

\begin{figure}[h]
    \centering
    \includegraphics[width=0.5\textwidth]{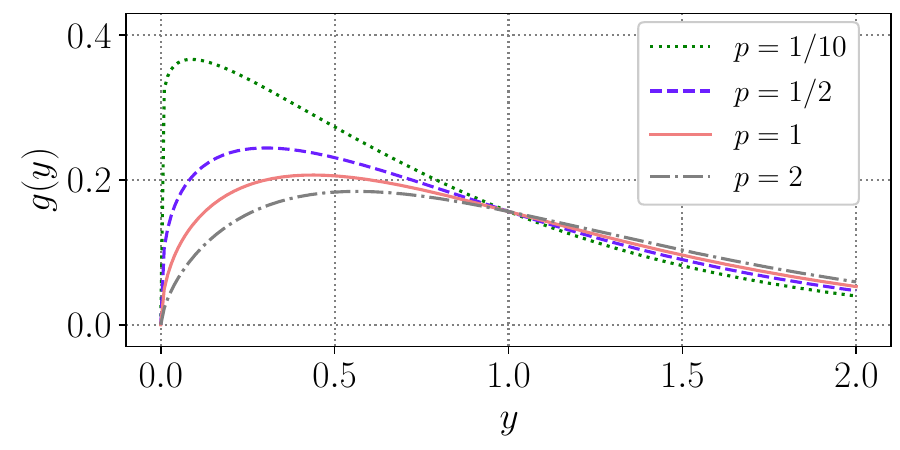}
    \caption{Plot of function $ g(y) =  {y ^{\frac{1+2p}{1+p}}} / (e ^{2y} -1)$ [Eq.~\eqref{seq:qfiovert.rescale}] versus $y$ for $p=1/10$, $1/2$, $1$, and $2$. As shown in the plot, the function reaches its maximum at a finite $y _{0} \in (0,1)$.}
    \label{suppfig:fyoverp}
\end{figure}

\subsubsection{Entanglement advantage in estimating non-Markovian noise with power-law spatiotemporal correlations}
\label{subsec:qfi.Nqb.color}

We now compute the optimal performance of multiqubit sensors in estimating non-Markovian noise with factorized power-law spatiotemporal correlations. Specifically, we consider the multiqubit sensor described in the main text, where the sensor-signal interaction Hamiltonian reads $\hat H _{\text{int}} (t)= \sum _{j=1} ^{N} B (x _{j},t) \hat h _{j} $. $B (x ,t)$ describes a spatiotemporally correlated noise field with correlation functions satisfying (see Eq.~\eqref{eq:Cmat.1overf} in the main text)
\begin{align}
&C (x,x'; t-t' ) \equiv \overline{\delta B (x,t) \delta B (x',t') },
\\
&S [ \omega ; x,x'] \equiv \int ^{+\infty } _{-\infty} C (x,x'; \tau ) e ^{i \omega \tau } d \tau
= \tilde{\xi} |\omega| ^{-p} |x - x' | ^{-\alpha} c_{0} (x-x') 
.
\end{align}
Because the temporal and spatial noise correlations factorize, from Eqs.~\eqref{seq:zeta.pl} in the above section, we can relate the density matrix $\hat \rho (t_{\mathrm{s}} )$ of a multiqubit sensor evolving under the \textit{non-Markovian} noise at time $t_{\mathrm{s}} $ to the density matrix $\hat \rho _{\text{eff}} (t_{\mathrm{eff}} )$ of a fictitious multiqubit sensor undergoing \textit{Markovian} noise with correlation function $C _{\mathrm{eff}} (x,x'; t-t' ) = \gamma \delta (t-t') |x - x' | ^{-\alpha} c_{0} (x-x')  $ at an appropriately chosen  effective time, as
\begin{align}
\hat \rho (t_{\mathrm{s}} ) = \hat \rho _{\text{eff}} (t_{\mathrm{eff}} = \gamma ^{-1} t_{\mathrm{s}} ^{1+p} \mathcal{C} _{\{\theta_{j}\} } )
, 
\end{align}
where $\mathcal{C} _{\{\theta_{j}\} }$ is again a constant coefficient that depends on the control pulse sequence (see Eq.~\eqref{seq:calC.gen.echo} for an analytical form). This mapping between the physical sensor state under non-Markovian noise and a fictitious sensor state evolving for an effective time (albeit unphysical) makes it convenient for us to prove general bounds on the performance of non-Markovian noise sensors using the general upper bound on the QFI growth rate in Eq.~\eqref{seq:qfi.rate.C} proven in Sec.~\ref{suppsec:opt.fast.reset}.

More concretely, we bound the time-averaged QFI of the original sensor evolving under non-Markovian noise via
\begin{align}
& \frac{\mathcal{F } _{Q} (\hat \rho (t_{\mathrm{s}} )| \{\theta_{j}\}  )}{t_{\mathrm{s}}}
= \frac{\mathcal{F } _{Q} (\hat \rho _{\text{eff}} (t_{\mathrm{eff}} = \gamma ^{-1} t_{\mathrm{s}} ^{1+p} \mathcal{C} _{\{\theta_{j}\} } )  )}{t_{\mathrm{s}}}
\\
= & \frac{\mathcal{F } _{Q} (\hat \rho _{\text{eff}} (t_{\mathrm{eff}} = \gamma ^{-1} t_{\mathrm{s}} ^{1+p} \mathcal{C} _{\{\theta_{j}\} } )  )}{\left ( \frac{\gamma t_{\mathrm{eff}} }{\mathcal{C} _{\{\theta_{j}\} }} \right ) ^{\frac{1}{1+p}}}
\\
\label{seq:quasi.tavg.qfi}
= & \left ( \frac{\mathcal{C} _{\{\theta_{j}\} }}{\gamma  } \right ) ^{\frac{1}{1+p}} \frac{\mathcal{F } _{Q} (\hat \rho _{\text{eff}} (t_{\mathrm{eff}}   )  )}{t_{\mathrm{eff}} ^{\frac{1}{1+p}}}
.
\end{align}
Thus, the evaluation of the optimal time-averaged QFI for the \textit{non-Markovian} noise sensor can be mapped to the following optimization problem on a quasi-time-averaged QFI of a \textit{Markovian} noise sensor:
\begin{align}
\max_{t_{\mathrm{s}} } \frac{\mathcal{F } _{Q} (\hat \rho (t_{\mathrm{s}} )| \{\theta_{j}\}  )}{t_{\mathrm{s}}}
= \left ( \frac{\mathcal{C} _{\{\theta_{j}\} }}{\gamma  } \right ) ^{\frac{1}{1+p}} 
\max _{t_{\mathrm{eff}}  } \frac{\mathcal{F } _{Q} (\hat \rho _{\text{eff}} (t_{\mathrm{eff}}   )  )}{t_{\mathrm{eff}} ^{\frac{1}{1+p}}}
.
\end{align}
In particular, as the mapping only modifies the time dependence of the sensor's density matrix and does not impact its entanglement properties, we can relate the entangled and unentangled non-Markovian noise sensors to their counterparts undergoing the mapping:
\begin{align}
\label{seq:qfi.max.color.teff}
\max_{t_{\mathrm{s}} } \frac{\mathcal{F } _{Q} (\hat \rho _{\text{sep}} (t_{\mathrm{s}} )| \{\theta_{j}\}  )}{t_{\mathrm{s}}}
= \left ( \frac{\mathcal{C} _{\{\theta_{j}\} }}{\gamma  } \right ) ^{\frac{1}{1+p}} 
\max _{t_{\mathrm{eff}}  } \frac{\mathcal{F } _{Q} (\hat \rho _{\text{eff;sep}} (t_{\mathrm{eff}}   )  )}{t_{\mathrm{eff}} ^{\frac{1}{1+p}}}
.
\end{align}

As $\hat \rho _{\text{eff}} (t_{\mathrm{eff}}   ) $ describes the quantum state of a sensor evolving under Markovian spatially correlated noise, it satisfies Eq.~\eqref{seq:qfi.rate.C}; combining the upper bound in Eq.~\eqref{seq:qfi.rate.C} and convexity of separable states, we can prove the following scaling of the optimal time-averaged QFI of the original non-Markovian sensors, in the absence of any cross-sensor entanglement:
\begin{align}
\label{seq:opt.qfiavg.color.sep}
\max_{t_{\mathrm{s}} } \frac{\mathcal{F } _{Q} (\hat \rho _{\text{sep}} (t_{\mathrm{s}} )| \{\theta_{j}\}  )}{t_{\mathrm{s}}}
= \Theta (N)
.
\end{align}
An example unentangled sensor initial state that achieves this optimal scaling consists of the product state $\otimes _{a} \frac{\ket {h _{a,\max}} + \ket {h _{a,\min}}}{\sqrt{2}} $, where $\ket {h _{a,\max}}$ and $\ket {h _{a,\min}}$ denotes the eigenstates of $\hat h _{a} $ with maximal and minimal eigenvalues, respectively.
As for the entangled sensors, when $\alpha+p<1$, we can again use Eq.~\eqref{seq:qfi.rate.C} and convexity of quantum states to prove that the maximal quasi-time-averaged QFI in Eq.~\eqref{seq:quasi.tavg.qfi} is achieved by an entangled GHZ-type sensor initial state $ (\otimes _{a}\ket {h _{a,\max}} + \ket {\otimes _{a}h _{a,\min}}) / \sqrt{2} $. Further, combining Eq.~\eqref{seq:opt.qfi.ent.pl} and Eq.~\eqref{seq:zeta.pl}, we can derive the dephasing factor function for the GHZ-type sensor state as
\begin{align}
p \in (-1,3):& \quad 
\zeta _{\{\theta_{j}\} }   (t_{\mathrm{s}})  =  \xi t_{\mathrm{s}} ^{1+p} \mathcal{C} _{\{\theta_{j}\} }
\times \begin{cases} 
\Theta(N^{2 - \alpha} )  & \text{for } \alpha < 1, \\ 
\Theta ( N \log N ) & \text{for } \alpha = 1, \\ 
\Theta ( N ) & \text{for } \alpha > 1,
\end{cases}
\end{align}
which effectively adds an $N$-dependent factor to the constant coefficient $\mathcal{C} _{\{\theta_{j}\} }$ in the dephasing factor. From Eq.~\eqref{seq:qfi.max.color.teff}, we can derive the optimal time-averaged QFI for the non-Markovian noise sensor as 
\begin{align}
\max_{t_{\mathrm{s}} } \frac{\mathcal{F } _{Q} (\hat \rho  (t_{\mathrm{s}} )| \{\theta_{j}\}  )}{t_{\mathrm{s}}}
= \begin{cases} 
\Theta(N^{\frac{2 - \alpha}{1+p}} )  & \text{for } \alpha +p < 1, \\ 
\Theta ( N \log N ) & \text{for } \alpha+p = 1, \\ 
\Theta ( N ) & \text{for } \alpha +p > 1.
\end{cases}
\end{align}
In particular, when $\alpha +p > 1$, the optimal strategy involves using a product state, whereas the GHZ-type state $ (\otimes _{a}\ket {h _{a,\max}} + \ket {\otimes _{a}h _{a,\min}}) / \sqrt{2} $ only achieves a time-averaged QFI of $\Theta(N^{\frac{\max \{ 1, 2-\alpha\}}{1+p}})$.
Dividing the above equation by the optimal time-averaged QFI for the unentangled sensors in Eq.~\eqref{seq:opt.qfiavg.color.sep}, we obtain the following entanglement advantage for the non-Markovian noise sensors:
\begin{align}
\mathcal{R} = 
\begin{cases} 
    \Theta(N^{\frac{1- \alpha-p}{1+p}} )  & \text{for } \alpha + p < 1, \\ 
    \Theta ( \log N ) & \text{for } \alpha + p = 1, \\ 
    O (1) & \text{for } \alpha + p > 1,
\end{cases}
\end{align}
which proves Theorem~\ref{thm:ent.R.1overf}.

\end{document}